\newcommand{\commentout}[1]{}
\newcommand{\eat}[1]{}
\newcommand{\topic}[1]{\vspace{0.2cm}\noindent{\bf #1 :}}
\newcommand{\calH}{{\mathcal H}}
\newcommand{\calJ}{{\mathcal J}}
\newcommand{\calC}{{\mathcal C}}
\newcommand{\calF}{{\mathcal F}}
\newcommand{\calT}{{\mathcal T}}
\newcommand{\calP}{{\mathcal P}}
\newcommand{\calG}{{\mathcal G}}
\newcommand{\calU}{{\mathcal U}}
\newcommand{\calV}{{\mathcal V}}
\newcommand{\calS}{{\mathcal S}}
\newcommand{\opt}{\mathsf{Opt}}
\newcommand{\OPT}{\mathsf{OPT}}
\newcommand{\sol}{\mathsf{Sol}}
\newcommand{\dist}{\mathsf{dist}}
\newcommand{\Z}{\mathbb{Z}}
\newcommand{\R}{\mathbb{R}}
\newcommand{\hx}{\widehat{x}}
\newcommand{\hy}{\widehat{y}}
\newcommand{\tx}{\widetilde{x}}
\newcommand{\ty}{\widetilde{y}}
\newcommand{\bx}{\overline{x}}
\newcommand{\hp}{\widehat{p}}
\newcommand{\rednote}[1]{#1}
\newtheorem{assumption}{Assumption}
\newcommand{\mincsc}{$\mathsf{MIN}$-$\mathsf{CSC}$}
\newcommand{\minwcsc}{$\mathsf{MIN}$-$\mathsf{WCSC}$}
\newcommand{\mincds}{$\mathsf{MIN}$-$\mathsf{CDS}$}
\newcommand{\minds}{$\mathsf{MIN}$-$\mathsf{DS}$}
\newcommand{\minwcds}{$\mathsf{MIN}$-$\mathsf{WCDS}$}
\newcommand{\minwds}{$\mathsf{MIN}$-$\mathsf{WDS}$}
\newcommand{\bcsc}{$\mathsf{Budgeted}$-$\mathsf{CSC}$}
\newcommand{\bsc}{$\mathsf{MaxCov}$}
\newcommand{\qst}{$\mathsf{QST}$}
\newcommand{\gst}{$\mathsf{GST}$}
\newcommand{\hittingset}{$\mathsf{HitSet}$}
\newcommand{\lpr}{\mathsf{Lp}\text{-}\mathsf{GST}}
\newcommand{\lpflow}{\mathsf{Lp}\text{-}\mathsf{flow}}
\newcommand{\lphittingset}{\mathsf{Lp}\text{-}\mathsf{HS}}
\newcommand{\lpsteiner}{\mathsf{Lp}\text{-}\mathsf{ST}}
\newcommand{\group}{\mathsf{gp}}
\newcommand{\cell}{\mathsf{cl}}
\newcommand{\setcell}{\mathsf{\Delta}}
\newcommand{\cells}{\mathsf{CG}}
\newcommand{\terminal}{\mathsf{Ter}}
\renewcommand{\mod}{\operatorname{mod}}
\newcommand{\Rc}{R_\mathsf{c}}
\newcommand{\Rs}{R_\mathsf{s}}
\newcommand{\Dc}{D_\mathsf{c}}
\newcommand{\Ds}{D_\mathsf{s}}
\newcommand{\Gc}{\mathcal{G}_\mathsf{c}}
\newenvironment{proof}{\noindent {\em Proof: }\ignorespaces}{}
\newcommand{\qed}{\hspace*{\fill}$\Box$\medskip}
\newtheorem{theorem}{Theorem}
\newtheorem{lemma}{Lemma}
\newtheorem{definition}{Definition}
\newtheorem{observation}{Observation}
\newtheorem{remark}{Remark}
\title{Approximation Algorithms for the Connected Sensor Cover Problem}
\author{Lingxiao Huang \quad\quad\quad\quad Jian Li \thanks{Corresponding author.}
	\quad \quad \quad \quad Qicai Shi  \\
Institute for Interdisciplinary Information Sciences\\
Tsinghua University, China \\
\mailsa
}
\begin{document}

\maketitle

\begin{abstract}
	We study the minimum connected sensor cover problem  (\mincsc) and
	the budgeted connected sensor cover  (\bcsc) problem,
both motivated by important applications (e.g., reduce the communication cost among sensors) in wireless sensor networks.
In both problems, we are given a set of sensors and a set of target points in the Euclidean plane.
In \mincsc, our goal is to find a set of sensors of minimum cardinality,
such that all target points are covered, and all sensors can communicate with each other (i.e., the communication graph is  connected). We obtain a constant factor approximation algorithm, assuming that the ratio between the sensor radius and communication radius is bounded.
In \bcsc\ problem, our goal is to choose a set of $B$ sensors, such that the number of targets covered by the chosen sensors is maximized and the communication graph is connected. We also obtain a constant approximation under the same assumption.
\end{abstract}

\vspace{-0.2cm}
\section{Introduction}
\vspace{-0.2cm}

In many applications, we would like to monitor a region or a collection of targets of interests
by deploying a set of wireless sensor nodes.
A key challenge in such applications is the limited energy supply for each sensor node.
Hence, designing efficient algorithms for minimizing energy consumption and maximizing the lifetime of the network
is an important problem in wireless sensor networks and many variations have been studied extensively. We refer interested readers to the book by Du and Wan~\cite{du2012connected}
for many algorithmic problems in this domain.

In this paper, we consider two important sensor coverage problems.
Now, we introduce some notations and formally define our problem.
We are given a set $\calS$ of $n$ sensors in $\R^d$.
All sensors in $\calS$ have the same communication range $\Rc$
and the same sensing range $\Rs$.
In other words, two sensors $s$ and $s'$ can communicate with each other
if $\dist(s,s')\leq \Rc$, and a target point $p$ can be covered by sensor $s$
if $\dist(p,s)\leq \Rs$.
We use $D(s,R)$ to denote the disk with radius $R$ centered at point $s$.
Let $\Dc(s)=D(s,\Rc)$ and $\Ds(s)=D(s,\Rs)$.

\begin{assumption}[Funke et al.~\cite{funke2007improved}]
	\label{ass:1}
	In this paper, we assume that $\Rs/\Rc$ can be upper bounded by a constant $C=O(1)$
	(i.e., $\Rs/\Rc\leq C$). 
	Without loss of generality, we can assume that $\Rc=1$.
	Hence, $\Rs= O(1)$.
\end{assumption}

Note that this assumption holds for most practical
applications, e.g., it generalizes Funke et al.~\cite{funke2007improved} which assumes that $\Rs/\Rc\leq 1/2$.

The first problem we study is the {\em minimum Connected sensor covering} (\mincsc) problem.
This problem considers the problem of selecting the minimum number of sensors
that form a connected network and detect all the targets.
It is somewhat similar, but different from, the connected dominating set problem.
We will discuss the difference shortly.
The formal problem definition is as follows:

\begin{definition}
	\mincsc: Given a set $\calS$ of sensors
	and a set $\calP$ of target points,
	find a subset $\calS'\subseteq \calS$ of minimum cardinality
	such that all points in $\calP$ are covered by the union of sensor areas in $\calS'$
	and the communication links between sensors in $\calS'$ form a connected graph.
\end{definition}
\vspace{-0.1cm}

In some applications, instead of monitoring a set of discrete target points,
we would like to monitor a continuous range $R$, such as a rectangular area.
Such problems can be easily converted into a \mincsc\ with discrete points,
by creating a target point (which we need to cover) in each cell of the arrangement
of the sensing disks $\{\Ds(s)\}_{s\in \calS}$ restricted in $R$.

The second problem
studied in this paper is the {\em Budgeted connected sensor cover}  (\bcsc) problem.
The problem setting is the same as \mincsc, except that
we have an upper bound on the number of sensors we can open, and the goal becomes
to maximize the number of covered targets.

\vspace{-0.1cm}
\begin{definition}
	\bcsc: Given a set $\calS$ of sensors , a set $\calP$ of target points and a positive integer $B$,
	find a subset $\calS'\subseteq \calS$
	such that $|\calS'| \leq B$ and the number of points in $\calP$ covered by the union of sensor areas in $\calS'$ is maximum and the communication links between sensors in $\calS'$ form a connected graph.
\end{definition}

\rednote{Note that in this paper we only consider the unweighted versions for both problems. 
	We leave the weighted versions as an interesting future direction.}

\vspace{-0.3cm}
\subsection{Previous Results and Our Contributions}

\subsubsection{\mincsc}
The \mincsc\ problem was first proposed by Gupta et al.~\cite{gupta2006connected}.
They gave an $O(r\ln n)$-approximation
($r$ is an upper bound of the hop-distance between any two sensors
having nonempty sensing intersections).
Wu et al.~\cite{wu2013approximations} give an $O(r)$-approximation algorithm.
\rednote{Then, Wu et al.~\cite{wu2016connected} improved the approximation factor to $3r+1+(3r-2)\varepsilon$, which is best approximation ratio known so far (in terms of $r$).}
If $\Rs\leq \Rc/2$, $r=1$ and the above result implies a constant approximation.
However, even
$\Rs$ is slightly larger than $\Rc/2$, $r$ may still be  arbitrarily large.
We also notice that if $r=O(1)$, we must have $\Rs/\Rc= O(1)$.
So Assumption~\ref{ass:1} is a weaker assumption than the assumption that $r=O(1)$.
\rednote{Funke et al.~\cite{funke2007improved} showed that the greedy algorithm that provides complete coverage has an approximation factor no better than $\Omega(\log n)$.}

\mincsc\ is in fact a special case of the {\em group Steiner tree} problem
(as also observed in Wu et al~\cite{wu2013approximations,wu2016minimum}).
In fact, this can be seen as follows:
consider the communication graph (the edges are the communication links).
For each target, we create a group which consists for all sensor nodes that can cover the
target. The goal is to find a minimum cost tree spanning all groups.\footnote{
	Notice that the group Steiner tree is edge-weighted but \mincsc\ is node-weighted.
	However, since all nodes have the same (unit) weight, the edge-weight and node-weight of a tree differ by at most 1.
}
Garg et al~\cite{garg1998polylogarithmic}, combined with the optimal
probabilistic tree embedding \cite{fakcharoenphol2003tight}, \rednote{obtained} an $O(\log^{3}n)$ factor approximation algorithm the group Steiner tree problem via LP rounding.  Chekuri et al.~\cite{chekuri2006greedy} \rednote{claimed} nearly the same approximation ratio using pure combinatorial method.

Our first main contribution is a constant factor approximation algorithm for
\mincsc\ under Assumption~\ref{ass:1}, improving on the aforementioned results.
Our improvement heavily rely on the geometry of the problem (which the group Steiner tree approach ignores).

\begin{theorem}
	There is a polynomial time approximation algorithm which can achieve
	an approximation factor $O(C^2)$ for \mincsc.
	Under Assumption~\ref{ass:1}, the approximation factor is a constant.
\end{theorem}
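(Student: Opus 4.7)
The plan is to refine the reduction to group Steiner tree (GST) noted in the excerpt, replacing the generic polylogarithmic rounding by a geometric argument that exploits Assumption~\ref{ass:1}. Setting $\Rc=1$ and $\Rs\le C$, I would begin by overlaying $\R^{2}$ with a uniform grid of cells of side length $1/\sqrt{2}$: any two sensors lying in one cell communicate directly, and sensors in distinct cells can communicate only if the cells are within $O(1)$ grid steps. The key geometric feature this unlocks is that for every target $p\in\calP$ the disk $\Ds(p)$ intersects only $O(C^{2})$ cells, so the covering candidates for each target are concentrated in $O(C^{2})$ vertices of the derived cell graph $H$.

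Given this discretisation, the plan is to decouple covering from connectivity. For covering I would solve the planar disk-cover instance with candidate disks $\{\Ds(s)\}_{s\in\calS}$ and targets $\calP$ via any standard constant-factor approximation for disk cover in the plane, producing a cover $\calS_{1}\subseteq\calS$ with $|\calS_{1}|=O(|\OPT|)$. For connectivity I would compute a node-weighted Steiner tree in the cell graph $H$ that spans every cell touched by $\calS_{1}$; since $H$ has $O(1)$-bounded local degree, this Steiner subproblem admits a constant-factor approximation. The resulting cell-level tree is then \emph{materialised} by retaining $O(1)$ representative sensors per chosen cell to realise each edge as an explicit communication link.

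The main obstacle is certifying that a feasible cell-level Steiner tree of size $O(C^{2}\cdot|\OPT|)$ exists, because the cover $\calS_{1}$ is produced obliviously to connectivity and might a priori spread across cells far apart in $H$. I would handle this through a charging argument based on $\OPT$: project $\OPT$ onto the cell graph to obtain a connected skeleton $T^{*}\subseteq H$ of size $O(|\OPT|)$, and observe that for any $s\in\calS_{1}$ covering a target $p$ there is some $\OPT$-sensor $o$ also covering $p$, whence $\dist(s,o)\le 2\Rs = O(C)$, so the cell of $s$ lies within $O(C)$ grid steps of a cell of $T^{*}$. Fattening $T^{*}$ by the at most $O(C^{2})$ non-empty cells within this neighbourhood of each cell of $T^{*}$ yields a feasible Steiner tree in $H$ of the desired size, which combined with the $O(1)$-factor cover and cell-Steiner approximations delivers the claimed $O(C^{2})$ approximation.
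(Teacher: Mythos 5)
There is a genuine gap in your connectivity step. Your charging argument shows only that each sensor $s\in\calS_1$ covering a target $p$ is within \emph{Euclidean} distance $2\Rs=O(C)$ of some sensor $o\in\OPT$ covering the same target, and from this you conclude that fattening the projected skeleton $T^{*}$ by the $O(C^{2})$ nearby non-empty cells yields a cheap feasible Steiner tree in the cell graph $H$. But Euclidean proximity does not yield a path in the communication graph: since $\Rs$ may greatly exceed $\Rc$, the sensors $s$ and $o$ need not communicate, the grid cells between them may be empty or contain only mutually non-communicating sensors, and hence the fattened cell set need not induce a connected subgraph of $H$. In the worst case the only path in $\Gc$ from $s$ to the cells of $\OPT$ has length $\Omega(n)$ (or $s$ lies in a different connected component altogether), so a Steiner tree in $H$ spanning every cell touched by $\calS_1$ either does not exist or costs far more than $O(C^{2}\cdot|\OPT|)$. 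This is precisely the obstruction measured by the parameter $r$ in prior work (the hop distance between sensors with intersecting sensing regions), which Assumption~\ref{ass:1} does \emph{not} bound; your fattening step implicitly assumes a bound on $r$, which is a strictly stronger hypothesis than the one the theorem is stated under.

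The paper avoids this by never choosing the cover obliviously to connectivity. It first solves the flow relaxation $\lpflow$, and for each target $p$ identifies one grid cell $\cell(p)$ receiving an $\Omega(1/C^{2})$ fraction of the commodity-$p$ flow from the root. The hitting-set universe is then restricted to $\cup_{p}(\group(p)\cap\cell(p))$ --- only sensors in cells that the LP demonstrably reaches --- and the same fractional solution, rerouted to one representative per such cell and scaled by $C^{2}$, is feasible for the Steiner tree LP on those representatives. A single fractional object thus certifies simultaneously that the cover is cheap and that the covering cells are cheaply connectable, and the two rounded pieces glue together because every hitting-set sensor shares a cell (of diameter at most $\Rc$) with a representative on the tree. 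To salvage your ``cover first, connect second'' outline you would need to constrain the cover to cells carrying such a connectivity certificate; the purely geometric charging argument cannot provide one.
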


\begin{remark}
	The weighted version of the connected sensor covering problem (\minwcsc) has also been studied, in which each sensor has a nonnegative weight and the goal is to find a set of minimum weight. 
	Elbassioni et al.~\cite{elbassioni2012relation} showed that the problem is also a special case of the group Steiner tree problem and claimed an $O(\sqrt{n} \log n)$ factor approximation algorithm.
\end{remark}

\vspace{-0.3cm}
\subsubsection{\bcsc}
Recall in \bcsc, we have a budget $B$, which is the upper bound of the number of sensors we can use and our goal is to maximize the number of covered target points.
Kuo et al.\cite{kuo2013maximizing} study this problem
under the assumption that the communication and the sensing radius of sensors are the same
(i.e., $\Rs=\Rc$). They obtained an $O(\sqrt{B})$-approximation by transforming the problem to a more general connected submodular function maximization problem.

Recently, Khuller et al.~\cite{khuller2014analyzing} obtained
a constant approximation for the {\em budgeted generalized connected dominating set problem},
defined as follows: Given an undirected graph $G(V,E)$ and budget $B$, and a  monotone {\em special submodular function}
\footnote{
	$f$ is a special submodular function if (1) $f$ is submodular:
	$f(A\cup\{v\})-f(A)\geq f(B\cup \{v\})-f(B)$ for any $A\subset B\subseteq V$;
	(2) $f(A\cup X)-f(A)=f(A\cup B\cup X)-f(A\cup B)$ if $N(X)\cap N(B)=\emptyset$
	for any $X,A,B\subseteq V$. Here, $N(X)$ denotes the neighborhood of $X$ (including $X$).
} $f: 2^V\rightarrow \Z^+$, find a subset $S\subseteq V$ such that $|S|\leq B$,
$S$ induces a connected subgraph
and $f(S)$ is maximized.
If $\Rs\leq \Rc/2$ in \bcsc, the coverage function $f(S)$ (the number of targets covered by sensor set $S$) is a special submodular function.\footnote{\rednote{Consider $X,A,B\subseteq V$ satisfying that $N(X)\cap N(B)=\emptyset$. It implies that for any $x\in X$ and $y\in B$, $d(x,y)> R_c$. Since $R_S\leq R_c/2$, we have that $D_s(x)\cap D_s(y)=\emptyset$. Hence, $f(A\cup X)-f(A)=f(A\cup B\cup X)-f(A\cup B) = \left|y\in U: y\in (\cup_{x\in X} D_s(x))\setminus (\cup_{x\in A} D_s(x)) \right|$. It implies that $f(S)$ is a special submodular function.}} 
Hence, we have a constant approximation
for \bcsc\ when $\Rs\leq \Rc/2$.
When $\Rs> \Rc/2$, $f(S)$ may not be special submodular and the algorithm and analysis
in \cite{khuller2014analyzing} do not provide any approximation guarantee for \bcsc.

We note that it is also possible to adapt the greedy approach developed by group Steiner tree~\cite{chekuri2006greedy} and
polymatroid Steiner tree~\cite{calinescu2005polymatroid} to get polylogarithmic approximation
for \bcsc. However, it is unlikely that the approach can be made to achieve constant approximation
factors, and we omit the details.

In this paper, we improve the above results by presenting the first constant
factor approximation algorithm under the more general Assumption~\ref{ass:1}.

\begin{theorem}
	There is a polynomial time approximation algorithm which can achieve
	approximation factor of $\frac{1}{102C^2}$ for \bcsc.
	Under Assumption~\ref{ass:1}, the approximation factor is $O(1)$.
	\label{th_bcsc}
\end{theorem}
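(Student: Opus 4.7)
The plan is to bootstrap the constant-factor approximation for \mincsc\ into one for \bcsc\ via a density-based greedy scheme, in the spirit of the Khuller--Moss--Naor framework for budgeted maximum coverage. Let $\calS^*$ be an optimal \bcsc\ solution with $|\calS^*|\le B$ covering $\OPT$ targets, so $\calS^*$ has average ``coverage density'' $\rho^* = \OPT/B$. The first move is to guess (by enumeration up to a $(1+\epsilon)$ factor) the value of $\rho^*$ and to enumerate over a distinguished sensor $s^*\in\calS^*$, which will serve as the root of the tree we construct. Under Assumption~\ref{ass:1}, since $\Rc=1$ and $\Rs=O(1)$, the geometric span of $\calS^*$ around $s^*$ is $O(B)$, which localizes the search.

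Next, I would design a bicriteria subroutine using the \mincsc\ algorithm as a black box: given a target subset $Q\subseteq\calP$, return a connected sensor set of size $O(C^2)\cdot\opt_{\mincsc}(Q)$ covering $Q$. Using this, I would search for a connected sensor subtree $T$, rooted (or mergeable) at $s^*$, whose ratio $|\text{new targets covered by }T|/|T|$ is at least $\rho^*/O(C^2)$. Such a $T$ must exist inside $\calS^*$ whenever a constant fraction of $\OPT$ remains uncovered, because $\calS^*$ itself achieves density $\rho^*$, so some connected ``piece'' of it achieves at least the average density on the residual instance. Repeating this extraction until the accumulated size reaches $B$ yields a forest whose total coverage is $\Omega(\OPT/C^2)$ by the standard density-greedy analysis for budgeted submodular maximization.

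The main obstacle is gluing: the extracted pieces must form a single connected subgraph without blowing up the sensor count. I would handle this by forcing every subtree extracted by the subroutine to be rooted at the guessed $s^*$ (growing a monotonically enlarging tree rather than producing disjoint components), so connectivity is maintained as an invariant and the only combinatorial loss comes from the \mincsc\ approximation inside the density subroutine. Equivalently, at each step we are solving a ``rooted quota Steiner tree''-type subproblem on the sensor communication graph, for which a constant factor follows by running \mincsc\ on candidate target subsets of the right size and selecting the best density tree; the geometric assumption keeps this tractable by bounding the branching of promising subtrees per unit area.

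Finally, I would track the constants carefully: one factor of $C^2$ is inherited from \mincsc, and a second multiplicative constant (arising from the standard $(1-1/e)$-style telescoping in greedy budgeted coverage, together with the bicriteria slack on the subroutine and the extra factor for restricting to subtrees rooted at $s^*$) accounts for the $102$ in $\frac{1}{102C^2}$. The hardest technical step I expect is proving that, at every iteration, a rooted connected subtree in $\calS^*$ of density $\Omega(\rho^*)$ on the residual targets exists; this will likely require a charging argument that decomposes $\calS^*$ into bounded-depth subtrees around $s^*$ and exhibits the one of maximum residual density, using Assumption~\ref{ass:1} to cap the geometric overlap between covered disks.
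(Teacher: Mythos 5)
Your proposal takes a genuinely different route from the paper, but it has a gap at its core that I do not see how to close. The crux of your plan is a ``density subroutine'': at each step, find a connected sensor set, rooted at the guessed $s^*$, whose ratio of newly covered targets to sensors used is within $O(C^2)$ of the optimal density $\rho^*$. You propose to implement this by ``running \mincsc\ on candidate target subsets of the right size,'' but there are exponentially many target subsets, and \mincsc\ requires the target set as input; you never specify a polynomial-time way to identify which residual targets the high-density piece of $\calS^*$ covers. What you are really asking for is a constant-factor approximation to a maximum-density connected coverage problem with a \emph{submodular} (residual coverage) objective --- essentially the minimum-density group/polymatroid Steiner tree subproblem. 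The paper explicitly remarks that this greedy route (following Chekuri--Even--Kortsarz and Calinescu--Zelikovsky) is only known to give polylogarithmic factors, not constants. A secondary issue is gluing: a set returned by \mincsc\ for a target subset $Q$ need not be anywhere near $s^*$, and connecting it to the root can cost an unbounded number of extra sensors; asserting that each extraction is ``rooted at $s^*$'' restates the hard rooted-density subproblem rather than solving it.

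The missing idea --- and the paper's actual mechanism --- is a \emph{linearization} of the coverage function that makes quota Steiner tree machinery applicable. The paper partitions the plane into $\frac{\sqrt2}{2}\times\frac{\sqrt2}{2}$ cells and groups them into $k^2$ shifted classes $\cells_{a,b}$ with $k=\lceil 2\sqrt2 C+1\rceil$, so that sensors lying in \emph{distinct} cells of one class cover disjoint targets (Observation~\ref{ob:nodoublecover}). Within each cell it runs the greedy maximum-coverage algorithm to assign each sensor a \emph{modular} profit $\hp$ satisfying $\hp(D^i_{c,d})\ge(1-1/e)\max_{|V|\le i}p(V)$ (Lemma~\ref{lm_1/e}). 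The disjointness across cells plus the within-cell greedy guarantee that some class $\cells_{a',b'}$ admits a connected tree $T_0$ with at most $2B$ vertices and $\hp(T_0)\ge\frac{1}{k^2}(1-\frac1e)\OPT$. Since $\hp$ is additive, the $2$-approximate quota Steiner tree algorithm (Lemma~\ref{lm:qst}) finds a tree of at most $4B$ vertices with at least that profit, and a Jordan-type tree decomposition trims it to $B$ vertices losing only a factor $8$; multiplying $\frac18\cdot(1-\frac1e)\cdot\frac{1}{k^2}$ gives the $\frac{1}{102C^2}$ bound. Your density-greedy framework could in principle be made to work only if you supplied a constant-factor rooted max-density subroutine, and that is precisely the obstacle the paper's grid-plus-linearization construction is designed to avoid.
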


Our algorithm is inspired by \cite{khuller2014analyzing}.
In particular, we make crucial use of the geometry of the problem
to get around the issue required by \cite{khuller2014analyzing}
(i.e., the coverage function is required to be special submodular in their work).

\vspace{-0.2cm}
\subsection{Other Related Work}
\vspace{-0.1cm}

\mincsc\ is closely related to
the {\em minimum dominating set} (\minds) problem
and the {\em minimum connected dominating set} (\mincds) problem.
In fact, if the communication radius $\Rc$ is equal to the sensing radius $\Rs$ and the collection $\calS$ of sensors is equal to the collection $\calP$ of target points,
\mincsc\ is equivalent to \mincds.
%
In general graphs, \mincds\ inherits the inapproximability of set cover, so it is NP-hard to approximation \mincds\ within a factor of $\rho\ln n$ for any $\rho<1$ \cite{feige1998threshold,dinur2014analytical}.
Improving upon Klein and Ravi~\cite{klein1995nearly}, Guha and Khuller~\cite{guha1999improved} obtained a
$1.35\ln n$-approximation, which is the best result known for general graphs.

Lichtenstein~\cite{lichtenstein1982planar} proved that \mincds\ in unit disk graphs (UDG) is NP-hard (which also implies that \mincsc\ is NP-hard).
The first constant approximation algorithm for the unweighted \mincds\ problem in UDG
was \rednote{obtained} by Wan et al.\cite{wan2002distributed}.
This was later improved by Cheng et al.\cite{cheng2003polynomial}, who gave the first PTAS.
Many variants of \minds\ and \mincds, motivated by various applications in wireless sensor network, 
have been studied extensively. See \cite{du2012connected} for a comprehensive treatment.

For the weighted (connected) dominating set problem (\minwds\ and \minwcds), Amb{\"u}hl et al.~\cite{ambuhl2006constant} \rednote{provided} the first constant ratio approximation algorithms for both problems (the constants are 72 and 94 for \minwds\ and \minwcds\ respectively).
The constants were improved in a series of subsequent
papers~\cite{huang2009better,dai20095+,zou2011new,willsonbetter}. Recently, Li and Jin~\cite{li2015ptas} 
\rednote{obtained} the first PTAS
for \minwds\ and an improved constant approximation for \minwcds\ in UDG.

\bcsc\ is a special case of the submodular function maximization problem subject to a
cardinality constraint and a connectivity constraint.
Submodular maximization under cardinality constraint,
which generalizes the maximum coverage problem, is a classical combinatorial optimization problem and it is known the optimal approximation is $1-1/e$~\cite{nemhauser1978analysis,feige1998threshold}.
Submodular maximization under various more general combinatorial constraints (in particular, downward monotone set systems)
is a vibrant research area in theoretical computer science
and there have been a number of exciting new developments in the past few years
(see e.g., \cite{calinescu2011maximizing,vondrak2011submodular} and the references therein).
The connectivity constraint has also been considered in some previous work \cite{zhang2012complexity,kuo2013maximizing,khuller2014analyzing}, some of which we mentioned before.

\section{Preliminaries}
\label{sec:prel}

We need the following \emph{maximum coverage} (\bsc) in our algorithms.
\begin{definition}
	\bsc: Given a universe $U$ of elements and a family $\calS$ of subsets of $U$, and a positive integer $B$,
	find a subset $\calS'\subseteq \calS$
	such that $|\calS'| \leq B$ and the number of elements covered by
	$\cup_{S\in\calS'}S$ is maximized.
\end{definition}
We need to following well known result, by \cite{nemhauser1978analysis,hochbaum1998analysis}.
\begin{lemma}[Corollary 1.1 of  Hochbaum and Pathria \cite{hochbaum1998analysis}]
	The greedy algorithm is a $(1-\frac{1}{e})$-approximation for \bsc .
	\label{lm_bsc}
\end{lemma}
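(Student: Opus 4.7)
The plan is to use the standard marginal-gain argument for greedy on a monotone submodular coverage function. First I would fix an optimal solution $\calS^\star \subseteq \calS$ with $|\calS^\star| \leq B$ covering $\opt$ elements, and let $G_i$ denote the number of elements covered after $i$ iterations of greedy, with $G_0 = 0$. The goal is to establish $G_B \geq (1 - 1/e)\,\opt$.

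The key step is a one-iteration shrinkage bound: at iteration $i+1$, the set chosen by greedy adds at least $(\opt - G_i)/B$ new elements. I would justify this by an averaging argument. Let $U_i$ denote the set of elements already covered by greedy after $i$ picks, so $|U_i| = G_i$. The elements of $\opt$ lying outside $U_i$ number at least $\opt - G_i$, and they are witnessed by the (at most) $B$ sets of $\calS^\star$. By pigeonhole, some set $S \in \calS^\star$ contributes at least $(\opt - G_i)/B$ new elements relative to $U_i$. Since the greedy rule selects the set with the maximum marginal gain, its $(i+1)$-st pick is at least as good, giving $G_{i+1} - G_i \geq (\opt - G_i)/B$, equivalently $\opt - G_{i+1} \leq (1 - 1/B)(\opt - G_i)$.

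Unrolling this recurrence over $B$ iterations and applying the standard estimate $(1 - 1/B)^B \leq 1/e$ yields $\opt - G_B \leq (1 - 1/B)^B \opt \leq \opt/e$, so $G_B \geq (1 - 1/e)\,\opt$ as claimed. The only nontrivial bit is the averaging step that converts the aggregate shortfall $\opt - G_i$ into a per-set marginal-gain guarantee, but it is elementary. Since the statement is a classical result cited from \cite{hochbaum1998analysis,nemhauser1978analysis}, no further machinery is required, and the proof is essentially a short recurrence.
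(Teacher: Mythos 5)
Your proof is correct and is exactly the classical averaging-plus-recurrence argument for greedy maximum coverage; the paper itself supplies no proof for this lemma, simply citing Hochbaum et al.\ \cite{hochbaum1998analysis}, whose proof proceeds the same way. The one-step bound $G_{i+1}-G_i \geq (\opt - G_i)/B$ via pigeonhole over the at most $B$ optimal sets, followed by $(1-1/B)^B \leq 1/e$, is all that is needed.
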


A closely related problem is the {\em hitting set} problem.
\begin{definition}
	\hittingset: Given a universe $U$ of weighted elements (with weight function $c:U\rightarrow\R^+$)
	and a family $\calS$ of subsets of $U$
	find a subset $H\subseteq U$
	such that $H\cap S \neq \emptyset$ for all $S\in \calS$ (i.e., $H$ hits every subset in $\calS$) and
	$\sum_{u\in H}c_u$ is minimized.
\end{definition}

The \hittingset\ problem is equivalent to the set cover problem (where the elements and subsets switch roles).
It is well known that a simple greedy algorithm
can achieve an approximation factor of $\ln n$ for \hittingset\
and the factor is essentially optimal \cite{feige1998threshold,dinur2014analytical}.
In this paper, we use a geometric version of \hittingset\ in which
the set of given elements are points in $\R^2$ and the subsets are
induced by given disks (i.e., each $S\in\calS$ is the subset of points that can be covered by
a given disk). Geometric hitting set admits constant factor approximation algorithms (even PTAS)
for many geometric objects (including disks) \cite{bronnimann1995almost,Clarkson,mustafa2009ptas,varadarajan2010weighted,Chan2012}.
As mentioned in the introduction, \mincsc\ is a special case of
the following {\em group Steiner tree} (\gst) problem.

\begin{definition}
	\gst: We are given an undirected graph $G=(V,E,c,\calF)$
	where $c: E\rightarrow \Z^+$ is the edge cost function,
	and $\calF$ is a collection of subsets of $V$.
	Each subset in $\calF$ is called a group.
	The goal is to find a subtree $T$, such that
	$T\cap S\ne \emptyset$ for all $S\in \calF$ (i.e., $T$ spans all groups)
	and the cost of the tree $\sum_{e\in T}c_e$ is minimized.
\end{definition}

Our algorithm for \bcsc\ also needs the following {\em quota Steiner tree} (\qst) problem.

\begin{definition}
	\qst: Given an undirected graph $G=(V,E,c,p)$
	($c: E\rightarrow \Z^+$ is the edge cost function,
	$p:V\rightarrow\Z^+$ is the vertex profit function) and an integer $q$,
	find a subtree $T=\mathop{\arg\min}_{T \subset E,\sum_{v\in T } p(v) \geq q} \sum_{e \in T} c(e)$ of the graph $G$
	($T$ tries to collect as much profit as possible subject to the quota constraint).
\end{definition}

Johnson et al.~\cite{johnson2000prize} proposed the \qst\ problem and proved that any $\alpha$-approximation for the $k$-$\mathsf{MST}$ problem yields an $\alpha$-approximation for the \qst\ problem.
Combining with the $2$-approximation for $k-\mathsf{MST}$ developed by Garg \cite{garg2005saving},
we can get a $2$-approximation for the \qst\ problem.
\begin{lemma}
	\label{lm:qst}
	These is an  approximation algorithm with approximation factor $2$
	for \qst.
\end{lemma}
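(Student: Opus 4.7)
The plan is simply to combine the two ingredients already named just before the lemma: the approximation-preserving reduction from \qst\ to $k$-$\mathsf{MST}$ due to Johnson, Minkoff and Phillips, and Garg's $2$-approximation for $k$-$\mathsf{MST}$. Since both results are cited as black boxes, the ``proof'' is essentially a one-line composition; the only work to present is a brief sketch of why the reduction preserves the factor exactly.

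First I would recall the reduction. In $k$-$\mathsf{MST}$ one seeks a minimum-cost subtree spanning at least $k$ vertices of an edge-weighted graph; in \qst\ one seeks a subtree of total edge-cost at most $q$ maximizing total vertex profit. The reduction enumerates candidate optimal profit values $P^{\ast}$ (using the polynomially many distinct values achievable by subsets of vertices, or, if profits are encoded in binary, a standard geometric scaling of the profits) and for each guess constructs a $k$-$\mathsf{MST}$-type instance whose minimum-cost feasible trees correspond to subtrees of $G$ collecting profit at least $P^{\ast}$. Running an $\alpha$-approximation for $k$-$\mathsf{MST}$ on each guess, and returning the tree corresponding to the largest $P^{\ast}$ for which the returned cost satisfies the budget constraint, yields an $\alpha$-approximate solution to \qst.

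Second, I would plug in Garg's $2$-approximation for $k$-$\mathsf{MST}$~\cite{garg2005saving}, i.e.\ set $\alpha=2$, to obtain the claimed factor.

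The one point requiring care is the exactness of the reduction: one must verify that the approximation factor carries over without additional loss, which in particular means handling arbitrary integer profits without incurring a $(1+\epsilon)$ overhead from scaling. This is precisely the content of the Johnson--Minkoff--Phillips theorem~\cite{johnson2000prize}, so in the final writeup I would either cite that theorem directly or reproduce the (short) reduction argument to keep the paper self-contained. With that in hand, the lemma is immediate.
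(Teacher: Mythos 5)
Your proposal is correct and follows exactly the route the paper takes: the lemma is obtained by composing the Johnson--Minkoff--Phillips approximation-preserving reduction from \qst\ to $k$-$\mathsf{MST}$ with Garg's $2$-approximation for $k$-$\mathsf{MST}$, both cited as black boxes. The paper offers no further argument, so your additional sketch of why the reduction preserves the factor only makes the write-up more self-contained.
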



\section{Minimum Connected Sensor Cover}
\label{sec:mscs}

We first construct an edge-weighted graph $\Gc$ as follows:
If $\dist(s,s')\leq \Rc$, we add an edge between $s$ and $s'$
(It is easy to see that $\Gc$ is in fact a unit disk graph).
$\Gc$ is called {\em  the communication graph}.
Recall that \mincsc\ requires us to find a set of vertices
that induces a connected subgraph in the communication graph $\Gc$.

First, we note that $\Gc$ may have several connected components.
We can see any feasible solution must be contained in a single connected component
(otherwise, the solution can not induce a connected graph).
Our algorithm tries to find a solution in every connected component.
Our final solution will be the one with the minimum cost among all connected component.
Note that for some connected component, there may not be a feasible solution
in that component (some target point can not be covered by any point in that component),
and our algorithm ignores such component.

\eat{
	\begin{algorithm}
		\caption{ Find the minimum connected dominating set }
		\textbf{Input:} The sensor collection $\calS$, the target collection $\calP$, communicate radius and sensing radius of sensors $R_c=1$, $\Rs=C$\\
		\textbf{Output:} $V_{cds}\subseteq \calS$ such that, $G(V_{cds},E(\Gc))$ is a connected subgraph of $\Gc$ and for each target $p \in \calP$, there exists a sensor $v \in V_{cds}$, such that $ dist(v_t, v_c) \leq C $;\\
		
		\begin{enumerate}
			\item Use $\calS$ to calculate the graph $\Gc$.  \\
			\item Use $\calS,\calP$ to calculate $group(p),  \forall p\in \calP$  \\
			\item $V_{cds} \gets \varnothing, size \gets \infty$
			\item \textbf{for} every component $\calC$ in $\Gc$ \textbf{do}   \\
			\item \textbf{if} for each target $p \in \calP$, there exists a sensor $v \in \calC$, such that $dist(p,v) \leq C  $
			\begin{enumerate}
				\item \textbf{for} each $ v \in V(\calC)$ \textbf{do}
				\begin{enumerate}
					\item $x_{uu'}, y_u \gets $ Solving $\lpflow$ with root $v$
					\item Place a grid with grid size $\alpha$ in the plane
					
					\item \textbf{for} $p\in \calP$, $\cell(p) \gets \mathop{\arg\max}_{c \in grids} |group(p) \cap c| $
					\item $\calH \gets $ Solve $\lphittingset$ with parameters $\mathop{\cup}_{p \in \calP} \cell(p) \cap group(p) $ and round the result of $\lphittingset$ to an integral solution $\calH$
					\item $\calJ \gets $ Solve $\lpsteiner$ with parameters $ $  and round the result of $\lpsteiner$ to an integral solution $\calJ$
					\item $Sol \gets \calH \cap \calJ$
					\item \textbf{if} $|Sol| \leq size$, $size \gets |Sol|, V_{cds} \gets Sol $
				\end{enumerate}
				\item \textbf{end for}
			\end{enumerate}
			\item \textbf{end if}
			\item \textbf{end for}
			\item return $V_{cds}$
		\end{enumerate}
		\label{alg_mcds}
	\end{algorithm}
	
}

From now on, we fix a connected component $\calC$ in $\Gc$. Let $\calG[\calC]$ be the collection of all edges in the connected component $\calC$.
Similar with Wu et al.~\cite{wu2013approximations},
we formulate the \mincsc\ problem as a group Steiner tree (\gst) problem.
Each edge $e\in \calG[\calC]$ is associated with a cost $c_e=1$.
For each target $p\in \calP$, we create a group
$$\group(p)=\calC\cap D(p, \Rs)=\{s \mid s\in \calC, \dist(p,s)\leq \Rs\}.$$
The goal is to find a tree $\calT$ (in $\calG[\calC]$) such that
$\calT\cap \group(p)\ne \emptyset$ for all $p\in \calP$
and the cost is minimized.
We can easily see the \gst\ instance constructed above is equivalent to
the original \mincsc\ problem
(the cost of the tree $\calT$ is the number of nodes in $\calT$ minus 1).
The \gst\ problem can be formulated as the following linear integral program:
We pick a root $r\in \calC$ for the tree $\calT$ and remove all target points that are covered by $r$ from $\calP$\footnote{We can do this since the final solution always contain $r$; see Equation~\eqref{eq:terminal}.}
(we need to enumerate all possible roots).
For each edge $e\in \calG[\calC]$, we use Boolean variable $x_e$
to denote whether we choose edge $e$.

\begin{align}
\label{eq:lpgst1}
\tag{$\mathsf{ILp}$\text{-}$\mathsf{GST}$}
\mathrm{minimize} \,\, & \sum_{e\in \calG[\calC]}x_e \\
\mathrm{subject\,\, to} \,\, & \sum_{e\in \partial(S)} x_e \geq 1,\quad\text{for all }  S\subset \calC \text{ such that } r\in S \text{ and }\exists p, S\cap G_p=\emptyset; \notag\\
& x_e\in \{0,1\}, \quad \forall e\in \calG[\calC] \notag.
\end{align}

The second constraint says that for any cut $\partial(S)$ that separates the root
$r$ from any group, there must be at least one chosen edge.
By replacing $x_e\in \{0,1\}$ with $x\in [0,1]$, we obtain the linear programming relaxation of \ref{eq:lpgst1} (denoted as $\lpr$).
By the duality between flow and cut, we can see that
the second constraint is equivalent to dictating that
we can send at least 1 unit of flow from the root $r$ to nodes in $\group(p)$, for each $p$.
This flow viewpoint (also observed in the original \gst\ paper~\cite{garg1998polylogarithmic})
will be particularly useful to us later.
So we write down the flow LP explicitly as follows.
We first replace every undirected edge $e=(u,v)$
by two directed arcs $(u,v)$ and $(v,u)$.
Let $\widehat{\calG}[\calC]$ denote the collection of all directed arcs.
For each $p\in \calP$ and
each directed arc $(u,v)$, we have a variable $x^p_{uv}$ indicating the flow of commodity $p$ on arc $(u,v)$.
We use $y^p_v= \sum_{u} x^p_{uv}-\sum_{w}x^p_{vw}$ to denote the net flow
(also called {\em flow excess}) of commodity $p$ into node $v$.
Then we develop the following linear program:

\begin{align}
\label{eq:lpgst}
\tag{$\lpflow$}
\mathrm{minimize} \,\, & \,\, \sum_{(u,v)\in \widehat{\calG}[\calC]} x_{uv} \\
\mathrm{subject\,\, to} \quad & y^p_v= \sum_{u:(u,v) \in \widehat{\calG}[\calC]} x^p_{uv}-\sum_{w:(v,w) \in \widehat{\calG}[\calC]}x^p_{vw} \quad \text{ for all }v\in \calC, p\in \calP\notag \\
& y^p_r=-1                                                                  \quad \text{ for all }p\in \calP, \notag \\
& \sum_{v\in \group(p)}y^p_v\geq 1                              \quad \text{ for all }p\in \calP, \notag \\
& y^p_u=0                                                                   \quad \text{ for all }p\in \calP, \text{ and } u\not\in \group(p), u\ne r,\notag \\
& x^p_{uv}\leq x_{uv}                               \quad\text{ for all } (u,v) \in \widehat{\calG}[\calC], p\in \calP, \notag \\
& x_{uv}, x^p_{uv}, y^p_v\in [0,1],                                        \quad \text{for all } (u,v) \in \widehat{\calG}[\calC], p\in \calP \notag.
\end{align}

We first have the following lemma that connects two programs \ref{eq:lpgst} and \ref{eq:lpgst1}.

\begin{lemma}
	\label{lm:optimal}
	The optimal value of \ref{eq:lpgst} is at most the optimal value of \ref{eq:lpgst1}.
\end{lemma}

\begin{proof}
	Given a feasible solution $\left\{x_e\right\}_{e\in \calG[\calC]}$, we construct a feasible solution of $\lpflow$ as follows: 
	\begin{enumerate}
		\item By definition, $\left\{x_e\right\}_{e\in \calG[\calC]}$ form a tree $T$ rooted at $r$.
		Denote $A\subseteq \widehat{\calG}[\calC]$ to be the collection of directed arcs satisfying that $u$ is the father point of $v$ on tree $T$. 
		\item For each directed arc $(u,v)\in A$, let $x_{uv}=1$.
		Otherwise, let $x_{uv}=0$.
		\item For each $p\in \calP$, there must exist a sensor $s_p\in \calC\cap G_p$ belonging to tree $T$ by the constraints of \ref{eq:lpgst1}.
		Denote $A_p\subseteq A$ to be the collection of directed arcs satisfying that both $u$ and $v$ lie on the unique path from root $r$ to $s_p$ on tree $T$. 
		\item For each $p\in \calP$ and each directed arc $(u,v)\in A_p$, let $x^p_{uv}=1$.
		Otherwise, let $x^p_{uv}=0$.
		\item For each $v\in \calC$ and $p\in \calP$, let $y^p_v= \sum_{u:(u,v) \in \widehat{\calG}[\calC]} x^p_{uv}-\sum_{w:(v,w) \in \widehat{\calG}[\calC]}x^p_{vw}$.
	\end{enumerate} 
	By construction, we can check that all constraints of $\lpflow$ are satisfied.
	Moreover, $\sum_{(u,v)\in \widehat{\calG}[\calC]} x_{uv} = \sum_{e\in \calG[\calC]}x_e$.
	This completes the proof.
	\qed
\end{proof}

Denote $\OPT$ to be the optimal fractional value of $\lpflow$.
Now, we describe our algorithm.
Our algorithm mainly consists of two steps.
In the first step, we extract a {\em geometric hitting set} instance from the optimal fractional solution of $\lpflow$.
We can find an integral solution $H$ for the hitting set problem and we can show its cost is at most $O(C^2\OPT)$.
Then by Lemma~\ref{lm:optimal}, the size of $H$ is at most $O(C^2)$ times the optimal value of \ref{eq:lpgst1}.
Moreover all sensors in $H$ can cover all target points $p\in \calP$.
In the second step, we extract a Steiner tree instance, again from the optimal fractional solution of $\lpflow$.
We show it is possible to round the Steiner tree LP to get a constant approximation integral Steiner tree,
which can connect all points in $H$.

\topic{Step 1: Constructing the Hitting Set Instance}

We first solve the linear program $\lpflow$ and obtain the fractional optimal solution
$(x_{uv}, y_v)$.
Let $\opt(\lpflow)$ to denote the optimal value of $\lpflow$.
We place a grid with grid size $l=\frac{\sqrt{2}}{2}$ in the plane (i.e., each cell is a $\frac{\sqrt{2}}{2}\times \frac{\sqrt{2}}{2}$ square).
W.l.o.g., we assume that grid lines are parallel to either the $x$-axis or the $y$-axis.
For each $p\in \calP$, consider the set of sensors $\group(p)$, that is the set of sensors which can cover $p$.
Since $\group(p)$ is contained in a disk $D(p, \Rs)$ of radius $\Rs\leq C$, the diameter of $D(p,\Rs)$ that is parallel to the $x$-axis is fully covered by at most $\lceil \frac{2\Rs}{l} \rceil\leq 2\sqrt{2} C +1$ grid cells.
Similarly, the diameter of $D(p,\Rs)$ that is parallel to the $y$-axis is also covered by at most $2\sqrt{2} C +1$ grid cells.
Thus, we conclude that there are at most
\rednote{$(2\sqrt{2} C +1)^2$} grid cells that may contain some points in $\group(p)$.
Since  $\sum_{v\in \group(p)}y^p_v\geq 1$, there must be a cell (say $\cell(p)$) such that
\begin{align}
\label{eq:cellsum}
\sum_{v\in \group(p)\cap \cell(p)} y^p_v\geq 1/(2\sqrt{2} C +1)^2\stackrel{C\geq 1}{\geq} 1/16C^2 =\Omega(1).
\end{align}
We call $\cell(p)$ the {\em significant cell} for point $p$. 
\footnote{
	If there are multiple such cells, we pick one arbitrarily.
}

Now, we construct a geometric hitting set (\hittingset) instance $(\calU, \calF)$ as follows:
Let the set of points be
$
\calU=\cup_{p\in \calP} (\group(p)\cap\cell(p))
$
and  the family of subsets be
$
\calF=\{\group(p)\}_{p\in \calP}.
$
The goal is to choose a subset $H$ of $\calU$ such that
$\group(p)\cap H\ne \emptyset$ for all $p\in \calP$ (i.e., we want to hit every set in $\calF$).
Write the linear program relaxation for the \hittingset\ problem (denoted as $\lphittingset$):

\begin{align}
\label{eq:lpgst}
\tag{$\lphittingset$}
\mathrm{minimize} \,\, & \,\, \sum_{u\in \calU} z_u \\
\mathrm{subject\,\, to} \quad & \sum_{u\in \group(p)\cap \cell(p)}z_u\geq 1                              \quad \text{ for all }p\in \calP, \notag \\
& z_u\in [0,1],                                        \quad \text{for all } u \in \calU \notag.
\end{align}
Let $\opt(\lphittingset)$ to denote the optimal value of $\lphittingset$.
We need the following simple lemma.

\begin{lemma}
	\label{lm:hittingset}
	$\opt(\lphittingset)\leq 16C^2 \opt(\lpflow)$.
\end{lemma}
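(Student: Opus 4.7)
The plan is to build an explicit feasible fractional solution $z$ for $\lphittingset$ from the optimal solution $(x_{uv}, y_v^p)$ of $\lpflow$, and then argue that its cost is within an $O(C^2)$ factor of $\opt(\lpflow)$.

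First I would define, for each sensor $u \in \calU$,
\[
z_u \,:=\, \min\Bigl\{1,\ \alpha \cdot \max_{p \in \calP} y_u^p\Bigr\},
\]
where $\alpha = \Theta(C^2)$ is the constant chosen so that the lower bound in \eqref{eq:cellsum} satisfies $\alpha \cdot \Omega(1/C^2) \geq 1$. The intuition is that $\max_p y_u^p$ is a natural ``usage level'' of sensor $u$ in the flow solution, and by scaling it up by $O(C^2)$ we compensate for the loss incurred when restricting to a single grid cell.

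Next I would verify feasibility of $z$ for $\lphittingset$. Fix $p \in \calP$. If some $u \in \group(p) \cap \cell(p)$ has $z_u = 1$, then trivially $\sum_{u \in \group(p)} z_u \geq 1$. Otherwise $z_u \geq \alpha\, y_u^p$ for every such $u$, and using the cell inequality \eqref{eq:cellsum},
\[
\sum_{u \in \group(p)} z_u \,\geq\, \sum_{u \in \group(p)\cap \cell(p)} z_u \,\geq\, \alpha \sum_{u \in \group(p)\cap \cell(p)} y_u^p \,\geq\, \alpha \cdot \Omega(1/C^2) \,\geq\, 1.
\]

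For the cost, the key identity comes from the flow conservation in $\lpflow$: for any $v$ and $p$,
\[
y_v^p \,=\, \sum_u x_{uv}^p - \sum_w x_{vw}^p \,\leq\, \sum_u x_{uv}^p \,\leq\, \sum_u x_{uv}.
\]
Therefore $\max_p y_v^p \leq \sum_u x_{uv}$, and summing over $v$ gives $\sum_v \max_p y_v^p \leq \sum_{(u,v)} x_{uv} \leq O(1)\cdot \opt(\lpflow)$ (absorbing any factor of $2$ from directed-vs-undirected bookkeeping into the constant). Consequently
\[
\opt(\lphittingset) \,\leq\, \sum_u z_u \,\leq\, \alpha \sum_u \max_p y_u^p \,\leq\, O(C^2)\cdot \opt(\lpflow),
\]
as desired.

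The only real subtlety is the capping at $1$: I should make sure that the feasibility argument handles the case when $z_u$ is truncated. But that case is exactly the easy one (the truncated coordinate alone hits the constraint), so it causes no trouble. No other step is delicate; everything else is a routine chain of inequalities starting from the cell-containment bound \eqref{eq:cellsum} and the flow-conservation relation between $y_v^p$ and $x_{uv}$.
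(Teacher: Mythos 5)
Your proof is correct and follows essentially the same route as the paper's: define $z_u=\min\{1,\Theta(C^2)\max_p y^p_u\}$, use the cell bound \eqref{eq:cellsum} for feasibility, and bound the cost via flow conservation and the constraint $x^p_{uv}\leq x_{uv}$. The only cosmetic difference is that you split off the truncated case explicitly in the feasibility check, which the paper leaves implicit.
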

\begin{proof}
	Suppose $(x_{uv}, y_v)$ is the optimal fractional solution for $\lpflow$.
	Now, we want to construct a feasible fractional solution $\{z_u\}_{u\in \calU}$ for $\lphittingset$
	such that $\sum_{u\in \calU} z_u\leq O(C^2 \sum_{uv}x_{uv})=O(C^2\opt(\lpflow))$.
	We simply let
	$$
	z_u=\min\{1, 16C^2  \max_{p\in \calP} y^p_u\}.
	$$
	From \eqref{eq:cellsum}, we can easily see $z_u$ is a feasible solution for the \hittingset\ problem:
	$$
	\sum_{u\in \group(p)\cap \cell(p)}z_u\geq
	\sum_{u\in \group(p)\cap \cell(p)}\min\{1, 16C^2 y^p_u\}\geq
	1                              \quad \text{ for all }p\in \calP
	$$
	It remains to see that
	\begin{align*}
	\sum_{u\in \calU}z_u & \leq \sum_{u\in \calU}16C^2 \max_p y^p_u \leq 16C^2 \sum_{u\in \calU} \max_{p\in \calP} \left(\sum_{w\in \calC} x^p_{wu}\right) \\
	& \leq 16C^2 \sum_{u\in \calU}  \sum_{w\in \calC} \max_{p\in \calP} \left(x^p_{wu}\right)  = 16C^2 \sum_{u\in \calU}  \sum_{w\in \calC} x_{wu} \\
	&\leq 16C^2 \sum_{uv}x_{uv}.
	\end{align*}
	This finishes the proof.
	\qed
\end{proof}


\rednote{C{\u{a}}linescu et al.~\cite{calinescu2004selecting} showed that we can round the above linear program $\lphittingset$ to obtain an integral solution (i.e., an actual hitting set) $H\subset \calU$ such that $|H|\leq 102\cdot \opt(\lphittingset)$.}\footnote{Note that $\lphittingset$ is equivalent to a minimum disk cover problem if we regard each $u\in \group(p)\cap \cell(p)$ as a unit disk of radius $\Rs$ centered at $u$. Hence, we can apply the rounding scheme for the minimum disk cover problem in~\cite{calinescu2004selecting}.}
In another work, Br{\"o}nnimann and Goodrich \cite{bronnimann1995almost},
combined with the existence of $\epsilon$-net of size $O(1/\epsilon)$ for disks (see e.g.,~\cite{pyrga2008new}), also showed that we can round $\lphittingset$
to an actual hitting set
$H\subset \calU$ such that $|H|\leq O(\opt(\lphittingset))$
(the connection to $\epsilon$-net was made simpler and more explicit in Even et al.~\cite{even2005hitting}).
Hence, by Lemma~\ref{lm:hittingset}, 
we have that $|H|\leq O(C^2\opt(\lpflow))$.

\topic{Step 2: Constructing the Steiner Tree Instance}
We now have a hitting set $H\subset \calU$. 
Consider a node $u\in H$.
Since $u$ is a node (a sensor) in the hitting set, 
we know there is some point $p_u\in \calP$ such that $u\in \group(p_u)\cap \cell(p_u)$. In other words, $u$ can cover $p_u$ and is in the significant
cell of $p_u$.
From \eqref{eq:cellsum}, we know that 
$\sum_{v\in \group(p_u)\cap \cell(p_u)} y^{p_u}_v\geq \Omega(1/C^2)$.

Consider the set of cells $\setcell=\{\cell(p_u)\mid u\in H\}$
\footnote{
	If a cell is the significant cell for more than one target point $p_u$,
	$\setcell$ only has one copy of the cell. In other words,
	it is indeed a {\em set} of cells.
}
If there is a cell which contains the root $r$, we exclude it from $\setcell$.
From each cell $\cell(p)\in \setcell$, we pick an arbitrary node (i.e., sensor) $v(\cell(p))$ in it, 
called the {\em representative node} of $\cell$.
By \ref{eq:cellsum} (i.e., $\sum_{v\in \group(p)\cap \cell(p)} y^p_v\geq \Omega(1/C^2)$), at least $\Omega(1/C^2)$ 
flow of commodity $p$ that enters $\cell(p)$.

Consider the Steiner tree problem in $\calG(\calC)$ 
in which the set of terminals is defined to be
\begin{equation}
\label{eq:terminal}
\terminal=\{r\}\cup \{v(\cell)\mid \cell\in \setcell\}.
\end{equation}
In another word, the goal of this Steiner tree problems
is to connect $r$ and all representative nodes.
We write down the following linear program relaxation for the
Steiner tree problem (denoted as $\lpsteiner$):
\begin{align}
\label{eq:lpsteiner}
\tag{$\lpsteiner$}
\mathrm{minimize} \,\, & \,\, \sum_{e\in \calG[\calC]} x_{e} \\
\mathrm{subject\,\, to} \,\, & \sum_{e\in \partial(S)} x_e \geq 1,\quad\text{for all } S\subset \calC \text{ such that } r\in S \text{ and }\exists \cell\in \setcell,  v(\cell)\not\in  S  \notag\\
& x_e\in [0,1], \quad \forall e\in \calG[\calC] \notag.
\end{align}

Now, we construct a feasible fractional solution for $\lpsteiner$
as follows.
Consider the optimal fractional solution $(x_{uv},y_v)$ of $\lpflow$.
We would like to construct another feasible fractional solution $(\hx_{uv},\hy_v)$ for $\lpflow$.
First, we construct an intermediate solution $(\tx_{uv},\ty_v)$ by
{\em rerouting} some flow. 
Then, we scale the flow to construct $(\hx_{uv},\hy_v)$.
The details are as follows:

\begin{itemize}
	\item (Flow Rerouting)
	Consider a cell $\cell(p)\in \setcell$.
	For each node $u\in \group(p)\cap\cell(p)$, let $\tx^p_{uv(\cell(p))}\leftarrow x^p_{uv(\cell(p))}+y^p_u$, and let $\tx^p_{uv}\leftarrow x^p_{uv}$ for any node $v\neq v(\cell(p))$. In other words, we route
	the flow excess at node $u$ to node $v(\cell(p))$.
	After such updates, for each $u\in \group(p)\cap\cell(p), u\ne v(\cell(p))$ we can see the flow excess is zero, or
	equivalently $\ty^p_u=0$.
	The flow excess at node $v(\cell(p))$ is
	\begin{equation}
	\label{eq:flowbound}
	\ty^p_{v(\cell(p))}=\sum_{v\in \group(p)\cap \cell(p)} y^p_v\geq 1/16C^2.
	\end{equation}
	We repeat the above process for all $\cell(p)\in \setcell$.
	\item
	We next increase the flow excess at node $v(\cell(p))$ to 1 for all $\cell(p)\in \setcell$, and construct another feasible solution $(\hx_{uv},\hy_v)$. For each $\cell(p)\in \setcell$, we define $(\hx^p_{uv},\hy^p_v)$ as follows: 
	\begin{enumerate}
		\item For each edge $(u,v)$, let $\hx^p_{uv}\leftarrow \tx^p_{uv}/\ty^p_{v(\cell(p))}$.
		Note that such scaling increases the flow excess at node $v(\cell(p))$ by a $1/\ty^p_{v(\cell(p))}$ factor. 
		\item For each node $v$, let $\hy^p_v\leftarrow \ty^p_{v}/\ty^p_{v(\cell(p))}$. 
	\end{enumerate}
	After the scaling, $1$ unit flow (thinking $\hx^p_{uv}$ as the flow value on $(u,v)$) enters $v(\cell(p))$ and $\hy^p_{v(\cell(p))}=1$. On the other hand, we have that $\hx^p_{uv}= \tx^p_{uv}/\ty^p_{v(\cell(p))} \leq 16C^2 \tx^p_{uv}$ for each edge $e$ following from the fact that $1/\ty^p_{v(\cell(p))}\leq 16C^2$. 
\end{itemize}

Let $\check{x}_{e}=\max_{p\in \calP} \hx^p_{uv}+\max_{p\in \calP}\hx^p_{vu}$, where $e$ is the undirected edge corresponding to directed edges $uv$ and $vu$ (Notice that $\lpflow$ is formulated on directed graphs and Steiner tree is formulated on undirected graphs.
).
For each $v(\cell(p))$, since at least 1 unit flow (thinking $\hx^p_{uv}$ as flow value on $(u,v)$) enters $v(\cell(p))$ and $\check{x}_{e}\geq \hx^p_{e}$, $\check{x}_{e}$ is a feasible solution for 
$\lpsteiner$.

Next, we show the optimal value of $\lpsteiner$
is not much larger than that of $\lpflow$.

\begin{lemma}
	\label{lm:steiner}
	$\opt(\lpsteiner)\leq O(C^2 \opt(\lpflow))$.
\end{lemma}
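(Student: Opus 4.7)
The plan is to verify that $\check{x}$ satisfies the cut constraints of $\lpsteiner$ and then bound $\sum_e \check{x}_e$ by $O(C^2)\opt(\lpflow)$; together these yield $\opt(\lpsteiner)\leq \sum_e\check{x}_e = O(C^2\,\opt(\lpflow))$.

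For feasibility, I would fix a cut $\partial(S)$ with $r\in S$ and some representative $v(\cell)\notin S$ for $\cell\in\setcell$, and let $p$ be the commodity with $\cell(p)=\cell$. The rerouted pattern $\tx^p$ is still a valid $r$-rooted flow, and its excesses are nonnegative everywhere except at $r$ (the rerouting just transports excess from $\group(p)\cap\cell(p)$ to $v(\cell)$). By Equation~(\ref{eq:cellsum}), $\ty^p_{v(\cell)}\geq\Omega(1/C^2)$. Flow conservation applied to the complement of $S$ then gives $\sum_{u\in S,\,v\notin S}\tx^p_{uv}\geq \ty^p_{v(\cell)}$, so after the $C^2$-scaling, the forward cut sum of $C^2\tx^p$ is at least $1$. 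The elementary point is that truncation at $1$ preserves this: if every forward arc has $C^2\tx^p_{uv}\leq 1$, then the cap is a no-op and the sum is unchanged; otherwise some forward arc is capped to $1$, so the sum is already $\geq 1$. Hence $\sum_{u\in S,\,v\notin S}\hx^p_{uv}\geq 1$, and since $\check{x}_e\geq \hx^p_{uv}$ for the forward arc of $e$, the cut constraint of $\lpsteiner$ is satisfied.

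For the cost bound, I would rewrite $\sum_e \check{x}_e=\sum_{(u,v)\text{ arc}}\max_p\hx^p_{uv}\leq C^2\sum_{(u,v)}\max_p\tx^p_{uv}$. The key structural observation is that rerouting only modifies arcs of the form $(u,v(\cell))$ with $u\in\cell$: on every other arc, $\tx^p_{uv}=x^p_{uv}\leq x_{uv}$ for all $p$, so $\max_p\tx^p_{uv}\leq x_{uv}$; on a rerouting arc, $\max_p\tx^p_{uv}\leq x_{uv}+\max_p y^p_u$. Summing the baseline $x_{uv}$ over all arcs contributes $\opt(\lpflow)$. The correction terms satisfy
$$\sum_{\cell}\sum_{u\in\cell,\, u\neq v(\cell)} \max_p y^p_u \;\leq\; \sum_{u}\max_p y^p_u \;\leq\; \sum_u \sum_v x_{vu} \;=\; \opt(\lpflow),$$
where the middle inequality uses $y^p_u\leq\sum_v x^p_{vu}\leq\sum_v x_{vu}$ (and the fact that each $u$ lies in a single grid cell). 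Multiplying the total by the $C^2$ factor pulled out above yields $\sum_e\check{x}_e\leq 2C^2\,\opt(\lpflow)$, as required.

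The main subtlety I anticipate is the cap-at-$1$ step in the feasibility argument, since truncation can in principle destroy a cut lower bound; the two-case argument above is what rescues it. A supporting observation worth emphasizing is that each rerouting arc $(u,v(\cell))$ actually exists in the communication graph $\calG[\calC]$, because the grid side length $\sqrt{2}/2$ forces $\dist(u,v(\cell))\leq 1=\Rc$ whenever $u$ and $v(\cell)$ share a cell, which is precisely why that grid size was chosen.
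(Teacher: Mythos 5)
Your proof is correct and follows the same basic strategy as the paper: exhibit $\check{x}$ as a feasible solution of $\lpsteiner$ and bound $\sum_e\check{x}_e$ by $O(C^2)\opt(\lpflow)$. However, you supply two pieces that the paper's own (very short) proof omits or gets slightly wrong, and both are worth keeping. First, the paper simply asserts feasibility of $\check{x}$ ("it is easy to see"); your cut argument, and in particular the two-case treatment of the truncation $\min\{C^2\tx^p_{uv},1\}$, is exactly the verification that is needed, since capping flow values can in general destroy a cut lower bound. Second, the paper's cost bound ends with the chain $\sum_{(u,v)}\max_p\tx^p_{uv}=\sum_{(u,v)}x_{uv}$, which is not literally true: on a rerouting arc $(u,v(\cell))$ one has $\tx^p_{uv(\cell)}=x^p_{uv(\cell)}+y^p_u$, which may exceed $x_{uv(\cell)}$. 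Your decomposition into the baseline $\sum x_{uv}$ plus the correction $\sum_u\max_p y^p_u\leq\sum_u\sum_v x_{vu}=\opt(\lpflow)$ repairs this, at the harmless cost of a factor $2$ inside the $O(C^2)$. Your closing remark that the rerouting arcs actually exist in $\calG[\calC]$ because the cell diameter is $1=\Rc$ is also a point the paper leaves implicit but relies on. In short: same route, but your version is the one that actually closes the argument.
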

\begin{proof}
	Recall $\check{x}_{e}$ is a feasible solution for $\lpsteiner$ and $x_{uv}$ is the optimal solution for $\lpflow$. Also recall that $H\subset \calU$ is a hitting set instance satisfying that $|H|\leq O(C^2\opt(\lpflow))$.
	We only need to show that 
	$$
	\sum_{e\in \calG[\calC]} \check{x}_{e} \leq O(C^2\sum_{(u,v)\in \widehat{\calG}[\calC]} x_{uv} + |H|).
	$$
	This can be seen as follows:

	\begin{align*}
	\sum_{e\in \calG[\calC]} \check{x}_{e} = &\sum_{(u,v)\in \widehat{\calG}[\calC]} \max_{p\in \calP}\hx^p_{uv}
	= \sum_{(u,v)\in \widehat{\calG}[\calC]} \max_{p\in \calP}\tx^p_{uv}/\ty^p_{v(\cell(p))} \\
	\leq & \sum_{(u,v)\in \widehat{\calG}[\calC]}  \max_{p\in \calP} x^p_{uv}/ \ty^p_{v(\cell(p))} +\sum_{\cell(p) \in \setcell, u\in \group(p)\cap\cell(p)} y^p_{u}/\ty^p_{v(\cell(p))} \\
	\leq & \,\,\, 16C^2 \sum_{(u,v)\in \widehat{\calG}[\calC]}  \max_{p\in \calP} x^p_{uv} +\sum_{v(\cell(p))} 1 \\ 
	\leq &\,\,\, 16C^2 \sum_{(u,v)\in \widehat{\calG}[\calC]} x_{uv}+ |H|. 
	%
	\end{align*}
	The second equality follows from the construction of $\hx^p_{uv}$. The first inequality follows from the definition of $\tx^p_{uv}$ (we only reroute the flow for commodity $p$ such that $\cell(p)\in \setcell$, hence the second term). The second inequality follows from the fact that $\ty^p_{v(\cell(p))}\geq 1/16C^2$ and Equation~\eqref{eq:flowbound}.
	This finishes the proof of the lemma.
	\qed
\end{proof}

It is well known that the integrality gap of the Steiner tree problem is a constant \cite{williamson2011design}.
In particular, it is known that using the primal-dual method (based on $\lpsteiner$) in~\cite{goemans1995general}
(see also \cite[Chapter 7.2]{williamson2011design}),
we can obtain an integral solution $\bx_e$ such that
$$
\sum_{e\in \calG[\calC]} \bx_e \leq 2\opt(\lpsteiner) \leq O(C^2 \opt(\lpflow)) \leq O(C^2\OPT).
$$
Let $J$ be the set of vertices spanned by the integral Steiner tree $\{\bx_e\}$.
The above discussion shows that $|J|\leq O(C^2\OPT)$.
Our final solution (the set of sensors we choose) is
$\sol=H\cup J.$
The feasibility of $\sol$ is proved in the following simple lemma.
\begin{lemma}
	$\sol$ is a feasible solution.
	\label{lm:final1}
\end{lemma}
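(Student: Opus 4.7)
The plan is to verify the two feasibility conditions separately: coverage of all targets in $\calP$, and connectedness of the induced subgraph of $\Gc$ on $\sol$.

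Coverage is immediate from the construction of $H$. By Step 1, $H$ is an integral hitting set for the instance $(\calU,\calF)$ with $\calF=\{\group(p)\}_{p\in\calP}$, so $H\cap \group(p)\neq\emptyset$ for every $p\in\calP$. Since $H\subseteq\sol$, every target $p$ is covered by some sensor in $\sol$.

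For connectedness, first note that $J$ is the vertex set of a Steiner tree in $\calG[\calC]$ spanning the terminal set $\terminal=\{r\}\cup\{v(\cell)\mid \cell\in\setcell\}$, so the subgraph induced by $J$ in $\Gc$ is connected and contains the root $r$ together with every representative node. It therefore suffices to show that each $u\in H$ is adjacent in $\Gc$ to some vertex of $J$. By construction $u\in \group(p)\cap \cell(p)$ for some $p\in\calP$. The grid cell $\cell(p)$ is a square of side length $l=\tfrac{\sqrt{2}}{2}$, hence of diameter $l\sqrt{2}=1=\Rc$. Consider two cases. If $\cell(p)\in\setcell$, then both $u$ and the representative $v(\cell(p))\in J$ lie in the same cell, so $\dist(u,v(\cell(p)))\leq 1=\Rc$ and the edge $(u,v(\cell(p)))$ exists in $\Gc$. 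If instead $\cell(p)$ was excluded from $\setcell$, this can happen only because $\cell(p)$ contains the root $r$, in which case $u$ and $r\in J$ lie in the same cell and are likewise joined by an edge in $\Gc$. Either way, every $u\in H$ is adjacent in $\Gc$ to some vertex of $J$, so $H\cup J$ induces a connected subgraph of $\Gc$.

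The only subtlety is the book-keeping for the excluded cell containing $r$; once that case is handled, the geometric argument is just the observation that the grid side was chosen precisely so that the cell's diameter equals the communication radius. No serious obstacle is expected.
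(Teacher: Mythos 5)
Your proof is correct and follows essentially the same route as the paper's: coverage from $H$ being a hitting set, and connectivity by attaching each $u\in H$ to the connected Steiner tree on $J$ via an edge within its grid cell (whose diameter is exactly $\Rc=1$). You are in fact slightly more careful than the paper, which omits both the explicit diameter computation and the case of the excluded cell containing the root $r$; your handling of that case is the right one.
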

\begin{proof}
	We only need to show that $\sol$
	induces a connected graph and covers all the target points.
	Obviously, $H$ covers all target points, so does $\sol$.
	Since $J$ is a Steiner tree, thus connected. Moreover, $J$ connects
	all representatives $v(\cell)$ for all $\cell\in \setcell$. On the other hand,
	$H$ only contains those sensors in $\cell\in \setcell$.
	So every sensor in $v\in H$ (say $v\in\cell$) is connected to the representative $v(\cell)$.
	So $H\cup J$ induces a connected subgraph.
	\qed
\end{proof}

Lastly, we need to show the performance guarantee.
This is easy since we have shown that
both $|H|\leq O(C^2\OPT)$ and $|J|\leq O(C^2\OPT)$.
So $|\sol|=O(C^2\OPT)=O(\OPT)$ since $C$ is assumed to be a constant.

\section{Budgeted Connected Sensor Cover}
\label{sec:bcsc}

Again we assume that $\Rc=1$ and $\Rs=C$.
Recall that our goal is to find a subset $\calS'\subseteq \calS$
of sensors with cardinality $B$ which induces a connected subgraph and
covers as many targets as possible.
We first construct the communication graph $\Gc$ as in Section~\ref{sec:mscs}.
Again, we only need to focus on a connected component of $\Gc$.
Then we find a square $Q$ in the Euclidean plane large enough such that all of the $n$ sensors are inside $Q$.
Similar to \cite{marathe1995simple,hunt1998nc}, we partition $Q$ into small square cells of equal size.
Let the side length of each cell be $l=\frac{\sqrt{2}}{2}$.
Denote the cell in the ith row and jth column of the partition as $\cell_{i,j}$.
Let $V_{i,j} = \{v\in \calS \mid v\in \cell_{i,j} \}$ be the collection of sensors in $\cell_{i,j}$.
We then partition these cells into $k^2$ different cell groups
$\cells_{a,b}$,
where $k = \lceil 2C/l + 1 \rceil$.
In particular, we let
$$
\cells_{a,b} = \{\cell_{i,j}\mid i\equiv a (\mod k), j\equiv b (\mod  k)\} \text{ for }a\in [k], b \in [k],
$$
and
$\calV_{a,b} = \calS \cap \cells_{a,b}$ be the collection of sensors in $\cells_{a,b}$; see Figure~\ref{fig:grid} as an example.
\begin{figure}
	\label{fig:grid}
	\caption{Partition cells into $2^2$ different cell groups $\calV_{1,1}, \calV_{1,0},\calV_{0,1},\calV_{0,0}$.}
	\centering
	\includegraphics[height=130pt]{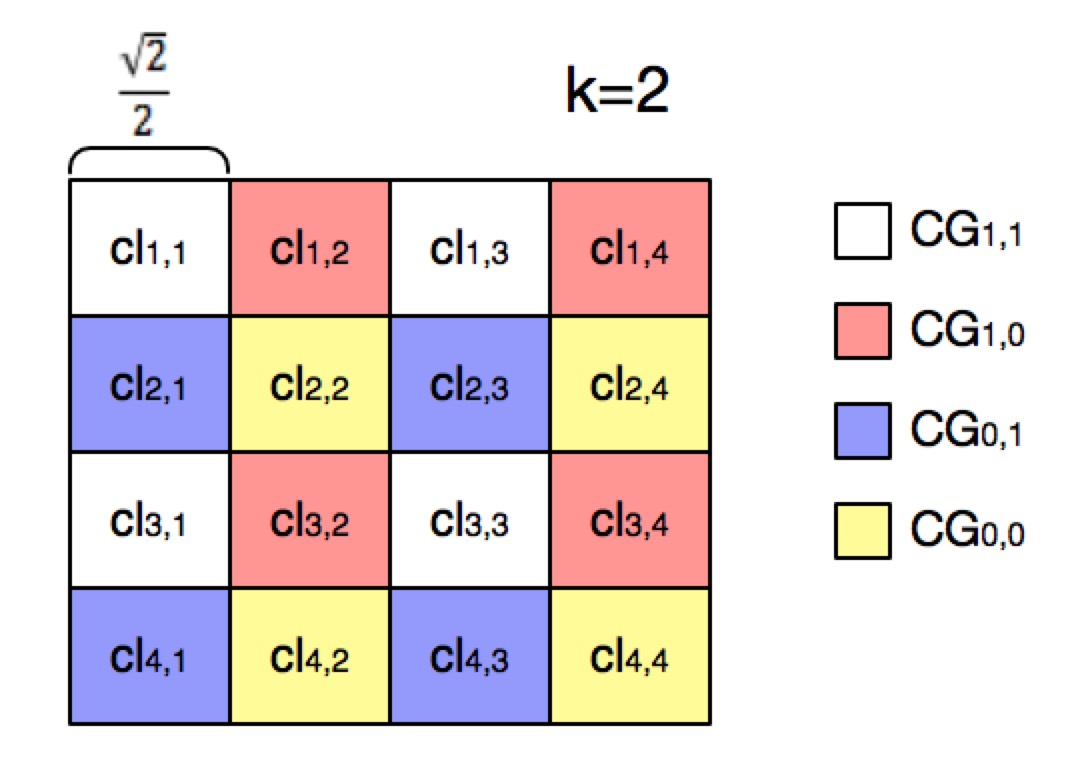}
\end{figure}

With the above value $k$, we make a simple but useful observation as follows.
\begin{observation}
	\label{ob:nodoublecover}
	There is no target covered by two different sensors contained in two different cells of $\cells_{a,b}$.
\end{observation}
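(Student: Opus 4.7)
The plan is to establish the observation through a direct geometric distance argument: the grid spacing $k$ has been chosen precisely so that any two distinct cells of the same group $\cells_{a,b}$ are far enough apart that two sensors, one drawn from each cell, cannot both lie within distance $R_s$ of a common target point.

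First I would fix two distinct cells $\cell_{i_1,j_1},\cell_{i_2,j_2}\in\cells_{a,b}$ and observe that, by the definition of the group, $i_1\equiv i_2\pmod k$ and $j_1\equiv j_2\pmod k$, so distinctness of the cells forces $|i_1-i_2|\geq k$ or $|j_1-j_2|\geq k$. Since the cells are axis-aligned, non-overlapping $l\times l$ squares tiling $Q$, any two points lying in cells whose (say) row indices differ by at least $k$ must differ by at least $(k-1)l$ in the corresponding coordinate, so every $s_1\in\cell_{i_1,j_1}$ and $s_2\in\cell_{i_2,j_2}$ satisfy $\dist(s_1,s_2)\geq (k-1)l$.

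Next, plugging in $k=\lceil 2C/l+1\rceil\geq 2C/l+1$ yields $(k-1)l\geq 2C=2R_s$. If some target $p\in\calP$ were covered by both $s_1$ and $s_2$, then $\dist(s_1,p)\leq R_s$ and $\dist(s_2,p)\leq R_s$, and the triangle inequality would give $\dist(s_1,s_2)\leq 2R_s$, contradicting the lower bound above (up to the edge case where the two bounds meet with equality). This would complete the proof.

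The only mild obstacle is the boundary case in which $(k-1)l$ equals $2R_s$ exactly, which would force the collinear degenerate configuration with $p$ lying precisely at the midpoint of $\overline{s_1 s_2}$. I would handle it either by adopting the standard convention that coverage is strict (open disks), or by observing that the choice $l=\sqrt{2}/2$ makes $2C/l=2\sqrt{2}\,C$ non-integral for rational values of $C$, so that in fact $(k-1)l>2C$ strictly and the contradiction is clean.
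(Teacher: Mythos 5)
Your proof is correct and is exactly the intended argument: the paper states this observation without proof, and the choice $k=\lceil 2C/l+1\rceil$ is made precisely so that two sensors in distinct cells of the same group are at distance at least $(k-1)l\geq 2C=2\Rs$, which the triangle inequality shows is incompatible with both covering a common target. Your handling of the boundary case (noting that $2\sqrt{2}C$ is irrational for rational $C$, so the inequality is in fact strict) is a reasonable way to dispose of the degenerate equality configuration.
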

Denote the optimal solution of \bcsc\ problem as $\OPT$.
In this section, we present an $O\bigl(\frac{1}{C^2}\bigr)$ factor approximation algorithm
for the \bcsc\ problem.

\subsection{The Algorithm}
For $0 \leq a , b < k$, we repeat the following two steps,
and output a tree $T$ with $O\bigl(B\bigr)$ vertices (sensors) which covers the maximum number of targets.
Then based on $T$, we find a subtree $\tilde{T}$ with exactly $B$ vertices as our final output.

\begin{algorithm}[t]
	\label{alg_GSC}
	\caption{Reassign profits via the greedy algorithm}
	\textbf{Input:} The sensor collection $\calS$, the target collection $\calP$, the cell collection $\cells_{a,b}$. \\
	\textbf{Output:} Profit function $\hp: \calP \to \mathbb{Z}^+ \cup \{0\}$\begin{enumerate}
		\item \textbf{for all} $\cell_{i,j}\in \cells_{a,b}$ \textbf{do}
		{\setlength\itemindent{15pt} \item $P_t \gets \calP$ ~~~~~~ //$P_t$ is the set of uncovered targets}
		{\setlength\itemindent{15pt} \item $V_s \gets V_{i,j}$  ~~~~~~ //$V_s$ is the set of available sensors\\
			{\setlength\itemindent{15pt} \item \textbf{for all} $v \in \calS$ \textbf{do} }
			\begin{enumerate}
				{\setlength\itemindent{15pt} \item $v \gets \mathop{\arg\max}_{v \in P_t} |N_{P_t}(v)|$  ~~~~~
					//$N_{P_t}(v)$ is the set of uncovered targets that can be covered by $v$. 		}
				{\setlength\itemindent{15pt} \item $\hp(v) \gets |N_{P_t}(v)|$,
					$P_t \gets P_t \backslash N_{P_t}(v)$,
					$V_s \gets V_s \backslash \{v\}$ }		
			\end{enumerate}
		}
		{\setlength\itemindent{15pt} \item \textbf{end for} }
		\item \textbf{end for}
		\item return $\hp$
	\end{enumerate}
\end{algorithm}

\topic{Step 1: Reassign profit}
The profit $p(S)$ of a subset $S\subseteq \calS$ is the number of targets covered by $S$.
$p(S)$ is a submodular function.
In this step,
we design a new profit function (called {\em modified profit function}) $\hp: \calS\rightarrow\Z^+$ for the set of sensors.
To some extent, $\hp$ is a linearized version of $p$ (module a constant approximation factor).

Now, we explain in details how $\hp$ is defined.
Fix a cell group $\cells_{a,b}$.
\footnote{
	For each $\cells_{a,b}$, we define a modified profit function $\hp_{a,b}$.
	For ease of notation, we omit the subscripts.
}
For the vertices in $\calV_{a,b}$, we use the greedy algorithm Algorithm \ref{alg_GSC}
to reassign profits of the vertices in $\calV_{a,b}$. Generally speaking, we greedily pick a vertex which covers the most number of targets each time, and use this number as the modified profit. The details are as follows.
Among all vertices in $\calV_{a,b}$,
we pick a vertex $v_1$ which can cover the most number of targets,
and use this number as its modified profit $\hp(v_1)$.
Remove the chosen vertex and targets covered by it.
We continue to pick the vertex $v_2$ in $\calV_{a,b}$ which can cover the most number of uncovered targets.
Set the modified profit $\hp(v_2)$ to be the number of newly covered targets.
Repeat the above steps until all the sensors in $\calV_{a,b}$ have been picked out.
For other vertices $v$ which are not in $\calV_{a,b}$, we simply set their modified profit $\hp(v)$ as 0.

Let us first make some simple observations about $p$ and $\hp$.
We use $\hp(S)$ to denote $\sum_{v\in S}\hp(v)$.
First, it is not difficult to see that $\hp(S)\leq p(S)$ for any subset $S\subseteq\calS$.
Second, we can see that it is equivalent to run the greedy algorithm for
each cell in $\cells_{a,b}$ separately (due to Observation~\ref{ob:nodoublecover}).
Suppose $S_1\subseteq \cell_{c,d}$,
$S_2\subseteq \cell_{c',d'}$ where $\cell_{c,d}$ and $\cell_{c',d'}$ are two different cells in $\cells_{a,b}$,
then $p(S_1\cup S_2)=p(S_1)+p(S_2)$ due to Observation~\ref{ob:nodoublecover}.

Consider a cell $\cell_{c,d}\in \cells_{a,b}$.
Let $D_{c,d} = \{v_1, v_2, ... ,v_n\}\subseteq \cell_{c,d}\cap \calS$,
where the vertices are indexed by the order
in which they were selected by the greedy algorithm.
Let $D_{c,d}^{i} = \{v_1, v_2, ... , v_i\}$ be the first $i$ vertices in $D_{c,d}$.
By the following lemma, we can see that the modified profit function $\hp$ is a constant approximation to
true profit function $p$ over any vertex subset $V\subseteq \calV_{a,b}$.

\begin{lemma}
	For a set of vertices $V$ in the same cell $\cell_{c,d}\in \cells_{a,b}$, such that $|V|\leq i$,
	we have that
	$p(D_{c,d}^{i})=\hp(D_{c,d}^{i}) \geq (1-1/e) p(V)$.
	\label{lm_1/e}
\end{lemma}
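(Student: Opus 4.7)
The plan is to reduce the lemma to the classical greedy guarantee for \bsc\ (Lemma~\ref{lm_bsc}) applied to sensors lying inside a single cell. Two things have to be checked: first, that $\hp(D_{c,d}^{i})$ coincides with $p(D_{c,d}^{i})$; second, that the sequence $v_1,v_2,\dots,v_i$ picked by the global greedy on $\calV_{a,b}$, when restricted to $\cell_{c,d}$, is exactly the sequence one would obtain by running greedy on the sub-instance whose universe is the targets covered by sensors of $\cell_{c,d}$ and whose sets are $\{\Ds(v)\cap \calP\}_{v\in \cell_{c,d}}$.

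First I would observe the coincidence $p(D_{c,d}^{i})=\hp(D_{c,d}^{i})$ directly from the construction: each $\hp(v_j)$ is, by definition, the number of previously uncovered targets that $v_j$ covers, so telescoping over $j=1,\dots,i$ gives the total number of targets covered, which is $p(D_{c,d}^{i})$. (This uses only that coverage is a union of sets.)

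Next I would isolate $\cell_{c,d}$ from the rest of $\cells_{a,b}$ using Observation~\ref{ob:nodoublecover}. That observation says targets split across cells of $\cells_{a,b}$: no target is simultaneously in $\Ds(u)$ and $\Ds(u')$ for sensors $u,u'$ sitting in distinct cells of the same group. Hence for any $S\subseteq \calV_{a,b}$ the coverage decomposes as a disjoint union, and consequently the greedy marginal gain of any sensor in $\cell_{c,d}$ at any stage depends only on which vertices of $\cell_{c,d}$ have already been chosen. This means the global greedy, when one looks only at its picks inside $\cell_{c,d}$, is precisely the greedy algorithm for the stand-alone maximum-coverage instance on $\cell_{c,d}$.

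Finally I would invoke Lemma~\ref{lm_bsc}: for the sub-instance restricted to $\cell_{c,d}$, the greedy sequence $v_1,\dots,v_i$ is a $(1-1/e)$-approximation to the best $i$-vertex subset of $\cell_{c,d}$. In particular, taking $V\subseteq \cell_{c,d}$ with $|V|\leq i$ as a feasible competitor yields $p(D_{c,d}^{i}) \geq (1-1/e)\,p(V)$, and combining with the first step gives $\hp(D_{c,d}^{i})=p(D_{c,d}^{i}) \geq (1-1/e)\,p(V)$, as required. There is no real obstacle here beyond pinning down the decoupling step cleanly; everything else is bookkeeping on the greedy construction.
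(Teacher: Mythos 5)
Your proposal is correct and follows essentially the same route as the paper: establish $p(D_{c,d}^{i})=\hp(D_{c,d}^{i})$ from the greedy construction, then invoke the $(1-1/e)$ guarantee of Lemma~\ref{lm_bsc} with $V$ as a feasible competitor. The only difference is that you spell out the telescoping and the per-cell decoupling via Observation~\ref{ob:nodoublecover}, which the paper states in the text just before the lemma rather than inside the proof.
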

\begin{proof}
	\eat{
		Suppose in Algorithm~\ref{alg_GSC}, we select
		the vertices in $V$ in the order of $i_1,i_2,\ldots,i_{|V|}$.
		Let $V_j=\{v_{i_1},v_{i_2},\ldots,v_{i_j}\}$ be the first $j$ vertices selected by the greedy algorithm in $V$.
		By the construction of $\hp$, we have
		$$\hp(V)=\sum_{1\leq j\leq |V|}\bigl( p(D_{i_j})-p(D_{i_{j}-1})\bigr) $$
		Note that $p(V)=\sum_{1\leq j\leq |V|}\bigl(p(\{V_{j}\})-p(V_{j-1})\bigr)$. Since $p$ is a submodule function and $V_j\subseteq D_{i_j}$ for $1\leq j\leq |V|$. So we have $p(V) \geq \hp(V)$.
	}
	By the greedy rule, we can see $p(D_{c,d}^{i})=\hp(D_{c,d}^{i})$.
	By Lemma \ref{lm_bsc}, we know that $\hp(D_{c,d}^{i}) \geq  (1-1/e) \max_{|V| \leq i} p(V)$.
	\qed
\end{proof}

\topic{Step 2: Guess the optimal profit and calculate a tree $T$} Although the actual profit of $\OPT$ is unknown, we can guess the profit of $\OPT$ (by enumerating all possibilities).
For each $0\leq a,b<k$, we calculate in this step
a tree $T$ of size at most $4B$, using the \qst\ algorithm (see Lemma~\ref{lm:qst}).
We can show that among these trees (for different $a,b$ values),
there must be one tree of profit no less than $\frac{1}{k^2}\left(1-\frac{1}{e}\right)\OPT$.

After choosing the best tree $T$ with the highest profit,
we construct a subtree $\tilde{T}$ of size $B$ based on $T$ as our final solution of \bcsc.

\begin{algorithm}
	\label{alg_gppafb}
	\caption{Algorithm for \bcsc\ with greedy profit assignment}
	\textbf{Input:} The sensor collection $\calS$, the target collection $\calP$, budget B. \\
	\textbf{Output:} a tree $\tilde{T}$ with $|\tilde{T}|\leq B $.
	\begin{enumerate}
		\item Construct the communication graph $\Gc$	
		\item \textbf{for} $a$ \textbf{from} $0$ to $k-1$, $b$ \textbf{from} $0$ to $k-1$
		\begin{enumerate}
			\item Reassign every vertex's profit with Algorithm \ref{alg_GSC} \rednote{and obtain a profit function $\hat{p}$}.
			\item Set every edge's cost as 1
			\item $ProfitOpt_{guess} \gets 1$
			\item \textbf{Do}
			\begin{enumerate}
				\item $T' \gets$ Run the $2$-approximation algorithm of \qst\ on $\Gc$ \rednote{with the profit function $\hat{p}$} and quota $ProfitOpt_{guess}$
				\item \textbf{if} $|T'| \leq 4B$ \textbf{then} $T \gets T'$
				\item $ProfitOpt_{guess} = ProfitOpt_{guess} + 1$
			\end{enumerate}
			\item \textbf{While}$(|T'| \leq 4B)$		
		\end{enumerate}			
		\item \textbf{end for}
		\item $\tilde{T} \gets $ use the dynamic programming algorithm described in Section 5.2.2 in \cite{khuller2014analyzing} to find the best profit subtree of size $B$ from $T$.
		\item return $\tilde{T}$	
	\end{enumerate}
\end{algorithm}

We first show that there exists $0\leq a,b <k$, such that based on the modified profit $\hp$ on $\cells_{a,b}$,
there exists a tree with at most $2B$ vertices of total modified profit at least
$\frac{1}{k^2}\left(1-\frac{1}{e}\right)\OPT$.
We use $T_{\OPT}$ to denote the set of vertices of the optimal solution.

\begin{lemma}
	There exists a tree $T_0$ in $\Gc$,
	$|T_0| \leq 2B$ such that $\displaystyle  \hp(T_0) \geq \frac{1}{k^2}\left(1-\frac{1}{e}\right) \OPT $
	\label{lm_1/k^2}
\end{lemma}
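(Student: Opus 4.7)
The plan is to build $T_0$ by starting from the optimal solution $T_{\OPT}$ and augmenting it with a carefully chosen set $S$ of greedy sensors from a single cell group $\cells_{a,b}$. The key observation driving the choice of $(a,b)$ is an averaging argument on the coverage function. Since every sensor lies in exactly one group, the sets $T_{\OPT}\cap\calV_{a,b}$ over $(a,b)\in[k]\times[k]$ partition $T_{\OPT}$, and by subadditivity of the coverage function,
\begin{equation*}
\OPT = p(T_{\OPT}) \;\leq\; \sum_{0\leq a,b<k} p(T_{\OPT}\cap\calV_{a,b}).
\end{equation*}
Hence there exists some $(a,b)$ for which $p(T_{\OPT}\cap\calV_{a,b})\geq \OPT/k^2$. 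Fix this cell group for the rest of the argument.

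Next I would use the greedy profit reassignment cell by cell. For each cell $\cell_{c,d}\in\cells_{a,b}$, let $i_{c,d}=|T_{\OPT}\cap\cell_{c,d}|$, and let $S_{c,d}=D_{c,d}^{i_{c,d}}$ be the top $i_{c,d}$ greedy sensors of that cell. Lemma~\ref{lm_1/e} (with $V=T_{\OPT}\cap\cell_{c,d}$) gives
\begin{equation*}
\hp(S_{c,d}) \;=\; p(S_{c,d}) \;\geq\; (1-1/e)\, p(T_{\OPT}\cap\cell_{c,d}).
\end{equation*}
Setting $S=\bigcup_{\cell_{c,d}\in\cells_{a,b}} S_{c,d}$, Observation~\ref{ob:nodoublecover} (no target is double-covered across distinct cells in $\cells_{a,b}$) allows me to sum these inequalities additively, yielding
\begin{equation*}
\hp(S) \;\geq\; (1-1/e)\sum_{\cell_{c,d}\in\cells_{a,b}} p(T_{\OPT}\cap\cell_{c,d}) \;=\; (1-1/e)\,p(T_{\OPT}\cap\calV_{a,b}) \;\geq\; \frac{1}{k^2}(1-1/e)\,\OPT.
\end{equation*}
Note also that $|S|=\sum_{c,d} i_{c,d} = |T_{\OPT}\cap\calV_{a,b}|\leq B$.

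Finally, I would form $T_0$ as any spanning tree of the subgraph of $\Gc$ induced by $T_{\OPT}\cup S$. The size bound $|T_0|\leq |T_{\OPT}|+|S|\leq 2B$ is immediate. For the modified-profit bound, $S\subseteq T_0$ gives $\hp(T_0)\geq \hp(S)\geq \frac{1}{k^2}(1-1/e)\OPT$. The crux is verifying that $T_{\OPT}\cup S$ induces a connected subgraph of $\Gc$: each sensor $v\in S$ lies in some cell $\cell_{c,d}$ with $i_{c,d}\geq 1$, so $\cell_{c,d}$ also contains some sensor $v'\in T_{\OPT}$. The cell has side length $l=\sqrt{2}/2$ and therefore diameter $1=\Rc$, so $v$ and $v'$ are adjacent in $\Gc$. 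Since $T_{\OPT}$ itself is connected, attaching each $v\in S$ to its witness $v'\in T_{\OPT}$ yields a connected graph, and a spanning tree exists.

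The only nontrivial obstacle is ensuring that the augmentation step preserves both connectivity and the size budget simultaneously. The grid-size choice $l=\sqrt{2}/2$ is precisely what makes every sensor in $S$ free to attach (within communication range) to a sensor already present in $T_{\OPT}$, while the fact that $S$ contains exactly $|T_{\OPT}\cap\calV_{a,b}|\leq B$ sensors is what keeps the total size at $2B$ rather than larger. Everything else reduces to the averaging argument and the cell-wise application of Lemma~\ref{lm_1/e}.
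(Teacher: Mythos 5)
Your proposal is correct and follows essentially the same route as the paper's proof: the same averaging argument over the $k^2$ cell groups, the same augmentation of $T_{\OPT}$ by the first $n_{c,d}$ greedy vertices per cell, and the same cell-wise application of Lemma~\ref{lm_1/e} combined with Observation~\ref{ob:nodoublecover}. You actually spell out the connectivity step (cell diagonal $l\sqrt{2}=1=\Rc$, so appended vertices attach to a witness in $T_{\OPT}$ within the same cell) more explicitly than the paper does.
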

\begin{proof}
	We first notice that
	\begin{align*}
	\OPT = p\left(\displaystyle \bigcup_{0\leq a , b <k} T_{\OPT} \cap \cells_{a,b}\right)
	\leq \displaystyle \sum_{0\leq a ,b <k} p\left(T_{\OPT} \cap \cells_{a,b}\right).
	\end{align*}
	Hence, there exists $0 \leq a',b' < k$, such that
	\begin{align*}
	p(T_{\OPT} \cap \cells_{a',b'}) \geq \frac{1}{k^2} \displaystyle \sum_{0\leq a , b <k} p(T_{\OPT} \cap \cells_{a,b})
	\geq \frac{1}{k^2} \OPT.
	\end{align*}
	For any cell $\cell_{c,d} \in \cells_{a',b'}$, suppose $n_{c,d}=|T_{\OPT}\cap \cell_{c,d}|$.
	$T_0$ is obtained from
	$T_{\OPT}$ by appending all vertices in $D_{c,d}^{n_{c,d}}$ (recall that $D_{c,d}^{n_{c,d}}$ consists of
	the first $n_{c,d}$ vertices selected in $\cell_{c,d}$ by the greedy algorithm).
	Note that we append at most $B$ vertices in total, and all vertices are still connected
	(
	since all vertices in the same cell are connected
	).
	Thus, $T_0$ is connected and has at most $2B$ vertices.
	
	By Lemma \ref{lm_1/e}, we can see that $\hp(D_{c,d}^{n_{c,d}}) \geq \left(1-\frac{1}{e}\right)p(T_{\OPT} \cap \cell_{c,d})$. Thus, we have \begin{align*}
	\hp(T_0)&=\sum_{\cell_{c,d}\in\cells_{a,b}}\hp(D_{c,d}^{n_{c,d}}) \geq \left(1-\frac{1}{e}\right)\sum_{\cell_{c,d}\in\cells_{a,b}}p\left(T_{\OPT} \cap \cell_{c,d}\right)\\
	&=\left(1-\frac{1}{e}\right)p(T_{\OPT} \cap \cells_{a,b}) \geq \frac{1}{k^2}\left(1-\frac{1}{e}\right)\OPT.
	\end{align*}
	Both equalities hold due to Observation~\ref{ob:nodoublecover}.
	\qed
\end{proof}

Then, by Lemma~\ref{lm:qst} and Lemma~\ref{lm_1/k^2},
if we run the \qst\ algorithm (with $\hp$ as the profit function),
we can obtain the suitable tree $T$ with at most $4B$ vertices of profit at least $\frac{1}{k^2}\left(1-\frac{1}{e}\right)p(\OPT)$.
The pseudocode of the algorithm can be found in Algorithm~\ref{alg_gppafb}.

\begin{lemma}
	Let $T$ be the tree obtained in Algorithm~\ref{alg_gppafb}, then
	$p(T) \geq \frac{1}{k^2}\left(1-\frac{1}{e}\right)\OPT$
	\label{lm_treeAppro}
\end{lemma}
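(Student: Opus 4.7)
The plan is to combine Lemma~\ref{lm_1/k^2} (which guarantees a \emph{good} tree exists for the right cell-shift) with the 2-approximation for \qst\ from Lemma~\ref{lm:qst}, then translate the guarantee from the modified profit $\hp$ back to the true profit $p$.

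First, I would invoke Lemma~\ref{lm_1/k^2} to pick the cell-shift pair $(a_0,b_0)$ for which there exists a tree $T_0$ with $|T_0|\le 2B$ and $\hp_{a_0,b_0}(T_0)\ge \tfrac{1}{k^2}(1-\tfrac{1}{e})\OPT$. When Algorithm~\ref{alg_gppafb} processes this particular pair, it calls the \qst\ subroutine on the communication graph with unit edge costs, vertex profit function $\hp_{a_0,b_0}$, and quota $q=2B$. Since $T_0$ uses at most $2B-1$ edges and has modified profit at least $\tfrac{1}{k^2}(1-\tfrac{1}{e})\OPT$, the optimum of this \qst\ instance is at least $\tfrac{1}{k^2}(1-\tfrac{1}{e})\OPT$. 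By Lemma~\ref{lm:qst}, the returned tree $T^{(a_0,b_0)}$ satisfies $|T^{(a_0,b_0)}|\le 4B$ and collects modified profit at least that of $T_0$; hence
\[
\hp_{a_0,b_0}\bigl(T^{(a_0,b_0)}\bigr)\;\ge\;\hp_{a_0,b_0}(T_0)\;\ge\;\frac{1}{k^2}\Bigl(1-\frac{1}{e}\Bigr)\OPT .
\]

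Next, I would convert this into a bound on the actual coverage $p$. Recall from the discussion right after the definition of $\hp$ that $\hp(S)\le p(S)$ for every $S\subseteq\calS$ (the modified profit of a vertex is at most the number of targets it covers, even after the greedy deletions). Applying this pointwise,
\[
p\bigl(T^{(a_0,b_0)}\bigr)\;\ge\;\hp_{a_0,b_0}\bigl(T^{(a_0,b_0)}\bigr)\;\ge\;\frac{1}{k^2}\Bigl(1-\frac{1}{e}\Bigr)\OPT .
\]
Since $T$ is selected as the tree of maximum actual profit $p(\cdot)$ over all $(a,b)\in[k]\times[k]$, we finally have $p(T)\ge p(T^{(a_0,b_0)})\ge \tfrac{1}{k^2}(1-\tfrac{1}{e})\OPT$, which is the desired inequality.

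The only place where care is needed is the interface between Lemma~\ref{lm:qst} and Lemma~\ref{lm_1/k^2}: one must be precise that the $2$-approximation for \qst\ (obtained via Garg's $k$-MST algorithm) preserves profit at least the optimum while inflating the cost by a factor of $2$, which is exactly what is required so that $2B$-sized witness $T_0$ certifies $T^{(a_0,b_0)}$ has size at most $4B$ and profit at least $\hp_{a_0,b_0}(T_0)$. Once that is spelled out, the rest of the argument is a two-line chain of inequalities, so I do not expect any real obstacle.
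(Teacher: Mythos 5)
Your proposal is correct and follows essentially the same route as the paper's (much terser) proof: invoke Lemma~\ref{lm_1/k^2} to get a witness tree of size at most $2B$ for the right cell-shift, use the $2$-approximation for \qst\ to obtain a tree of size at most $4B$ with modified profit at least $\frac{1}{k^2}\left(1-\frac{1}{e}\right)\OPT$, and then pass from $\hp$ to $p$ via $\hp(S)\le p(S)$. Your write-up is in fact more careful than the paper's about the interface with the \qst\ guarantee (the quota is on profit, with the factor $2$ inflating cost, as your closing paragraph correctly notes), so no changes are needed.
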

\begin{proof}
	By Lemma~\ref{lm_1/k^2}, we can obtain a tree $T$ with at most $4B$ nodes. We also have $\hp(T)\geq \frac{1}{k^2}\left(1-\frac{1}{e}\right)\OPT$.
	Since $p(S)\geq \hp(S)$ for any $S$, we have that
	$
	p(T) \geq \frac{1}{k^2}\left(1-\frac{1}{e}\right)\OPT.
	$
	\qed
\end{proof}

Then we show how to construct a subtree $\tilde{T}$ of $B$ vertices based on tree $T$. Our technique is the same as Khuller et al. \cite{khuller2014analyzing}.
Firstly, they use the following theorem by Jordan \cite{jordan1869assemblages} to prove Lemma \ref{lm_khuller}. Then by a carefully partition, they obtain a subtree with $B$ vertices of profit at least $\frac{1}{13}$ of original tree with $6B$ vertices. Our construction is almost the same except that the original tree $T$ in our setting has at most $4B$ vertices.

\begin{lemma}[Jordan \cite{jordan1869assemblages}]
	Given any tree on n vertices, we can decompose it into two trees (by replicating a single vertex) such that the smaller tree has at most $\lceil \frac{n}{2} \rceil$ nodes and the larger tree has at most $\lceil \frac{2n}{3} \rceil$ nodes.
	\label{lm_jordan}
\end{lemma}
\begin{lemma}[Khuller et al. \cite{khuller2014analyzing}]
	Let $B$ be greater than a sufficiently large constant. Given a tree $T$ with $6B$ nodes, we can \rednote{partition the vertex set of $T$} it into 13 trees of size at most $B$ nodes each.
	\label{lm_khuller}
\end{lemma}

Denote the subtree with highest total profit as $\tilde{T}$. 
By the above lemma, $\tilde{T}$ has at most $B$ nodes.
Then we show the following lemma.
\begin{lemma}
	Assume $B\geq 10$. $p(\tilde{T}) \geq \frac{1}{8} p(T)$
	\label{th_treeDecompose}
\end{lemma}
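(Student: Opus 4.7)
The plan is to mimic the tree-decomposition argument of Khuller et al.~\cite{khuller2014analyzing}. My goal is to decompose the tree $T$, which has at most $4B$ vertices, into $m\le 8$ edge-disjoint subtrees $T_1,\dots,T_m$ with $|T_i|\le B$ for every $i$, and then combine subadditivity and monotonicity of the coverage profit $p$ to produce a size-$B$ connected subtree of profit at least $p(T)/8$.

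First, I would root $T$ at an arbitrary vertex and perform a post-order tree-partition. At each vertex $v$ I keep track of the size of the currently unassigned subtree rooted at $v$; as soon as this size first lies in $[B/2,\,B]$, I cut the edge to its parent and set the corresponding subtree aside as a piece $T_i$. Standard analysis of this greedy cutting procedure shows that every resulting piece has at most $B$ vertices, and every piece except possibly the residual piece left at the root contains at least $B/2$ vertices. Combined with $|T|\le 4B$, this produces $m\le 8$ pieces; the residual piece is either merged with an adjacent piece (when the combined size is still at most $B$) or counted as one of the eight.

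Second, I would apply subadditivity of the coverage function: every target counted by $p(T)$ is covered by some sensor of $T$, and that sensor belongs to at least one piece $T_i$, so $p(T)\le \sum_{i=1}^{m} p(T_i)$. Averaging over the at-most-eight pieces yields some $T_{i^{\star}}$ with $p(T_{i^{\star}}) \ge p(T)/8$ and $|T_{i^{\star}}|\le B$.

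Finally, if $|T_{i^{\star}}|<B$, I would extend $T_{i^{\star}}$ inside $T$ to a connected subtree $T^{\star}$ of exactly $B$ vertices by repeatedly adjoining a vertex of $T$ adjacent to the current subtree; this is always possible because $T$ is connected and $|T|\ge B$. Monotonicity of $p$ then gives $p(T^{\star})\ge p(T_{i^{\star}})\ge p(T)/8$, and since $\tilde T$ is by definition a size-$B$ connected subtree of $T$ of maximum profit (computed by the dynamic program), we conclude $p(\tilde T)\ge p(T^{\star})\ge p(T)/8$. The main obstacle is the first step: arranging the decomposition so that the per-piece size bound $\le B$ and the piece-count bound $\le 8$ hold simultaneously, and in particular handling the residual piece at the root without inflating the number of pieces; this is precisely the delicate tree-partitioning argument borrowed from \cite{khuller2014analyzing}.
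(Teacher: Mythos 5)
Your overall skeleton is exactly the paper's: split $T$ (which has at most $4B$ vertices) into at most eight connected pieces of size at most $B$ each, use subadditivity of the coverage function to get $\sum_i p(T_i)\ge p(T)$, average to find one piece of profit at least $p(T)/8$, and invoke the maximality of $\tilde T$ among connected size-$B$ subtrees of $T$. The subadditivity, averaging, and final steps are all correct and match the paper.

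The gap is in the decomposition itself. Your greedy rule --- cut the edge above $v$ the first time the unassigned subtree rooted at $v$ has size in $[B/2,\,B]$ --- can fail at a vertex of high degree: if $v$ has several children whose unassigned subtrees each have size just below $B/2$, then the unassigned subtree rooted at $v$ jumps from below $B/2$ to well above $B$ without ever entering the window, so no cut is ever triggered there and the piece containing $v$ can exceed $B$. Indeed, an edge-disjoint partition into connected pieces all of size in $[B/2,B]$ need not exist at all (consider a star with $4B-1$ leaves). The paper, following Khuller et al., sidesteps this by using Jordan's theorem, which splits a tree on $n$ vertices into two subtrees sharing a single \emph{replicated} vertex, of sizes at most $\lceil n/2\rceil$ and $\lceil 2n/3\rceil$; applying it recursively to the $4B$-vertex tree yields at most eight pieces of size at most $B$. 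Since every sensor of $T$ still belongs to at least one piece, subadditivity survives the replication and the rest of your argument goes through unchanged. You should therefore replace the edge-cutting step with a vertex-replicating split (or otherwise handle high-degree vertices); as written, that step does not deliver the eight-piece decomposition on which the remainder of your proof relies.
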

\begin{proof} By Lemma \ref{lm_jordan}, we decompose the tree $T$ into two trees $T_1$ and $T_2$ such that $|T_1| \leq 2B $ and $|T_2| \leq \frac{8}{3}B+1$ and continue decomposing until the tree has at most $k$ vertices (as shown in the figure. Note that each subtree in the white square in the figure has at most $B$ vertices).
	Thus we can decompose a tree of size $4B$ to at most 8 subtrees of size at most $B$.
	See the figure.
	Suppose the subtrees are $T_1$,$T_2$,...,$T_8$. Then we have,
	\begin{align*}
	p(\tilde{T}) \geq \frac{1}{8} \sum_{i=1}^{8} p(T_i) \geq \frac{1}{8} p(T)
	\end{align*}
	So there is a subtree of size at most $k$ and profit at least $\frac{1}{8}p(T)$.
	\qed
\end{proof}
\includegraphics[height=130pt]{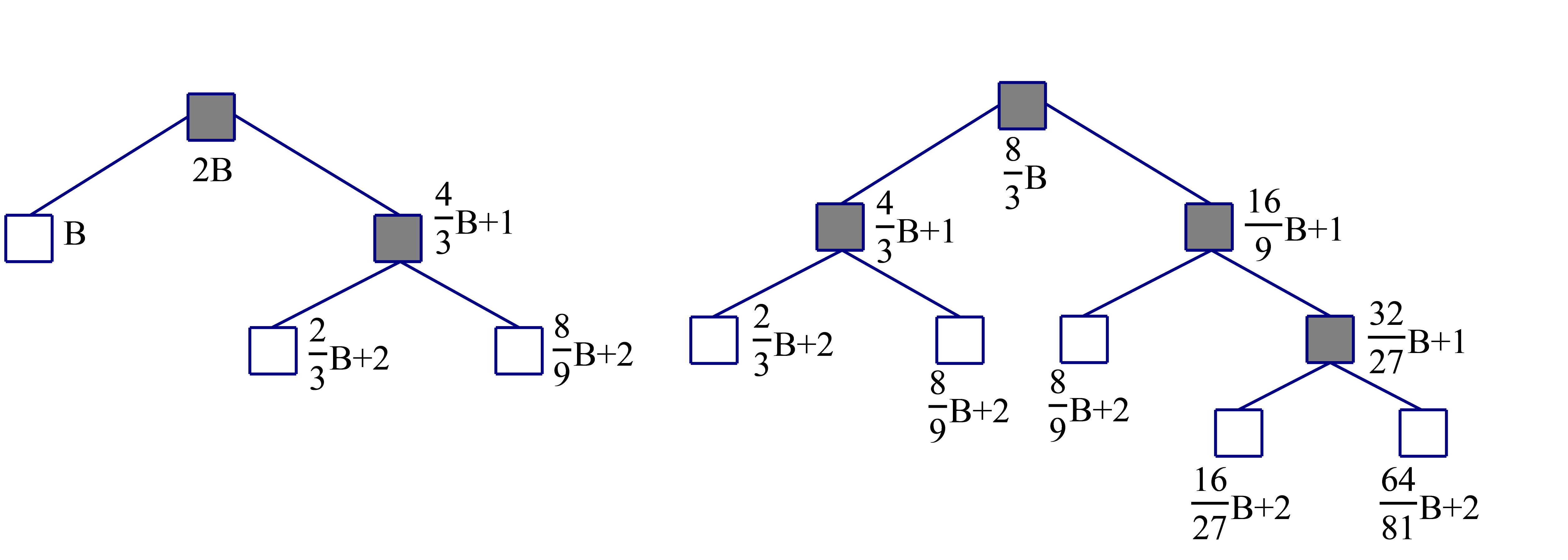}

Use the same dynamic programming algorithm in Khuller et al.~\cite{khuller2014analyzing},
we can find $\tilde{T}$ from tree $T$.
Combining Lemma \ref{lm_treeAppro} and Lemma \ref{th_treeDecompose},
$
p(\tilde{T}) \geq \frac{1}{8} \left(1-\frac{1}{e}\right) \frac{1}{(2\sqrt{2}C + 1)^2}\OPT = \frac{1}{12.66(8C^2+4\sqrt{2}C+1)}\OPT \geq \frac{1}{102C^2}\OPT$ (if $C\geq 100$).

Thus, we have obtained Theorem \ref{th_bcsc}.


\section{Conclusion and Future Work}
There are several interesting future directions.
The first obvious open question is that whether we
can get constant approximations for \mincsc\ and \bcsc\ without Assumption~\ref{ass:1}
(it would be also interesting to obtain approximation ratios that have better dependency on $C$).
Generalizing the problem further, an interesting future direction is the case where
different sensors have different transmission ranges and sensing ranges.
Whether the problems admit better approximation ratios than the (more general) graph theoretic
counterparts is still wide open.
Another interesting future direction is to obtain constant approximations for the weighted versions of  \mincsc\ and \bcsc.

\section{Acknowledgments}

We would like to thank anonymous reviewers for their constructive comments,
and pointing out a problematic argument in a previous version of the paper.
We also would like thank Dingzhu Du and Zhao Zhang for helpful discussions.
The research is supported in part by the National Basic Research Program of China Grant 2015CB358700, 
the National Natural Science Foundation of China Grant 61822203, 61772297, 61632016, 61761146003,
and a grant from Microsoft Research Asia.

\bibliographystyle{plain}
\bibliography{sensor}

\end{document}